\theoremstyle{plain}
\newtheorem{theorem}{Theorem}
\newtheorem{lemma}[theorem]{Lemma}
\theoremstyle{remark}
\newtheorem*{remark}{Remark}
\def\E{{\rm E}} 
\def\P{{\rm P}} 
\def\@biblabel#1{\hspace*{-\labelsep}}
\begin{document}
\title{A game-theoretic analysis of \\ baccara chemin de fer}
\author{S. N. Ethier\thanks{Partially supported by a grant from the Simons Foundation (209632).}\ \ and Carlos G\'amez\\ \noalign{\smallskip}
\textit{University of Utah and Universidad de El Salvador}}
\date{}
\maketitle

\begin{abstract}
Assuming that cards are dealt with replacement from a single deck and that each of Player and Banker sees the total of his own two-card hand but not its composition, baccara is a $2\times2^{88}$ matrix game, which was solved by Kemeny and Snell in 1957.  Assuming that cards are dealt without replacement from a $d$-deck shoe and that Banker sees the composition of his own two-card hand while Player sees only his own total, baccara is a $2\times2^{484}$ matrix game, which was solved by Downton and Lockwood in 1975 for $d=1,2,\ldots,8$.  Assuming that cards are dealt without replacement from a $d$-deck shoe and that each of Player and Banker sees the composition of his own two-card hand, baccara is a $2^5\times2^{484}$ matrix game, which is solved herein for every positive integer $d$.\medskip\par
\noindent \textit{AMS 2010 subject classification}: Primary 91A05; secondary 91A60. \vglue1.5mm\par
\noindent \textit{Key words and phrases}: baccara chemin de fer, sampling without replacement, matrix game, strict dominance, kernel, solution, infinite precision.
\end{abstract}

\section{Introduction}
The game of \textit{baccara chemin de fer} (briefly, baccara) played a key role in the development of game theory.  Bertrand's (1889, pp.~38--42) analysis of whether Player should draw or stand on a two-card total of 5 was the starting point of Borel's investigation of strategic games (Dimand and Dimand 1996, p.~132).  Borel (1924) described Bertrand's study as ``extremely incomplete'' but did not himself contribute to baccara.  It is unfortunate that Borel was unaware of Dormoy's (1873) work, which was less incomplete.  Von Neumann (1928), after proving the minimax theorem, remarked that he would analyze baccara in a subsequent paper.  But a solution of the game would have to wait until the dawn of the computer age.  Kemeny and Snell (1957), assuming that cards are dealt with replacement from a single deck and that each of Player and Banker sees the total of his own two-card hand but not its composition, found the unique solution of the resulting $2\times 2^{88}$ matrix game.  In practice, cards are dealt without replacement from a \textit{sabot}, or shoe, containing six 52-card decks.  Downton and Lockwood (1975), allowing a $d$-deck shoe dealt without replacement and assuming that Banker sees the composition of his own two-card hand while Player sees only his own total, found the unique solution of the resulting $2\times 2^{484}$ matrix game for $d=1,2,\ldots,8$.  They used an algorithm of Foster (1964).

Our aim in this paper is to solve the game without simplifying assumptions.  We allow a $d$-deck shoe dealt without replacement and allow each of Player and Banker to see the composition of his own two-card hand, making baccara a $2^5\times 2^{484}$ matrix game.  We derive optimal Player and Banker strategies and determine the value of the game, doing so for every positive integer $d$.  We too make use of Foster's (1964) algorithm.  We suspect that these optimal strategies are uniquely optimal, but we do not have a proof of uniqueness.

It will be convenient for what follows to classify the game-theoretic models of baccara in two ways.  First, we classify them according to how the cards are dealt.
\begin{itemize}
\item Model A.  Cards are dealt with replacement from a single deck.  
\item Model B.  Cards are dealt without replacement from a $d$-deck shoe.  
\end{itemize}
Second, we classify them according to the information available to Player and Banker about their own two-card hands.
\begin{itemize}
\item Model 1.  Each of Player and Banker sees the total of his own two-card hand but not its composition.
\item Model 2.  Banker sees the composition of his own two-card hand while Player sees only his own total.
\item Model 3.  Each of Player and Banker sees the composition of his own two-card hand.
\end{itemize}
(We do not consider the fourth possibility.)  Thus, Model A1 is the model of Kemeny and Snell (1957), Model B2 is the model of Downton and Lockwood (1975), and Model B3 is our primary focus here.  Model A2 was discussed by Downton and Holder (1972), but Models A3, B1, and B3 have not been considered before, as far as we know.

Like others before us, we restrict our attention to the classical parlor game of baccara chemin de fer, in contrast to the modern casino game.  (Following Deloche and Oguer 2007, we use the authentic French spelling ``baccara'' rather than the more conventional ``baccarat'' to emphasize this.)  The rules are as follows.  Denominations A, 2--9, 10, J, Q, K have values 1, 2--9, 0, 0, 0, 0, respectively.  The total of a hand, consisting of two or three cards, is the sum of the values of the cards, modulo 10.  In other words, only the final digit of the sum is used to evaluate a hand.  Two cards are dealt face down to Player and two face down to Banker, and each looks only at his own hand.  The object of the game is to have the higher total (closer to 9) at the end of play.  A two-card total of 8 or 9 is a \textit{natural}.  If either hand is a natural, the game is over and the higher total wins.  Hands of equal total result in a push (no money is exchanged).  If neither hand is a natural, Player then has the option of drawing a third card.  If he exercises this option, his third card is dealt face up.  Next, Banker, observing Player's third card, if any, has the option of drawing a third card.  This completes the game, and the higher total wins.  Winning bets on Player's hand are paid even money, with Banker, as the name suggests, playing the role of the bank.  Again, hands of equal total result in a push.  Since bystanders can bet on Player's hand, Player's strategy is restricted.  He must draw on a two-card total of 4 or less and stand on a two-card total of 6 or 7.  When his two-card total is 5, he is free to draw or stand as he chooses.  Banker, on whose hand no one can bet, has no restrictions on his strategy.

In the modern casino game, not only is Banker's strategy highly constrained but the casino collects a five percent commission on Banker wins.

In Section \ref{payoff matrix}, we show how to evaluate the payoff matrix.  We emphasize Model B3 but treat the other models as well.  In Section \ref{reduce}, we use strict dominance to reduce the payoff matrix under Model B3 to $2^5\times 2^{n_d}$, where $n_d$ depends on the number of decks $d$ and satisfies $18\le n_d\le 23$.  We get similar reductions of the other models.  To proceed further, in Section \ref{A1} we re-examine the unique solution of Kemeny and Snell (1957) under Model A1 and notice that there are multiple solutions under Models A2 and A3. In Section \ref{B1} we derive the unique solution under Model B1 for every positive integer $d$.  Model B1 is of interest because it shows the price of the ``with replacement'' assumption more clearly than do Models B2 and B3.  In Section \ref{B2} we re-derive the unique solution of Downton and Lockwood (1975) under Model B2, extending it to every positive integer $d$.  These results lead us in Section \ref{B3} to a solution under Model B3 for every positive integer $d$.  The feature of the game that allows this is that, under Model B3, the kernel is $2\times2$.  The two Banker pure strategies specified by the kernel are dependent on $d$, while the two Player pure strategies specified by the kernel are independent of $d$.  Optimality proofs are computer-assisted, with all computations carried out in infinite precision using \textit{Mathematica}.

\section{Preliminaries}\label{prelims}

The following lemma was used implicitly by Kemeny and Snell (1957), Foster (1964), and Downton and Lockwood (1975), and explicitly by Ethier (2010, p.~166).  The latter reference has a proof.

\begin{lemma}\label{Lemma1}
Let $m\ge2$ and $n\ge1$ and consider an $m\times 2^n$ matrix game of the following form.  Player I has $m$ pure strategies, labeled $0,1,\ldots,m-1$.  Player II has $2^n$ pure strategies, labeled by the subsets $T\subset\{1,2,\ldots,n\}$.  For $i=0,1,\ldots,m-1$, there exist probabilities $p_i(0)\ge0$, $p_i(1)>0$, \dots, $p_i(n)>0$ with $p_i(0)+p_i(1)+\cdots+p_i(n)=1$ together with a real number $a_i(0)$, and for $l=1,2,\ldots,n$, there exists a real $m\times 2$ matrix
\begin{equation*}
\left(
\begin{array}{cc}
a_{0,0}(l)&a_{0,1}(l)\\
a_{1,0}(l)&a_{1,1}(l)\\
\vdots & \vdots \\
a_{m-1,0}(l)&a_{m-1,1}(l)\\
\end{array}
\right).
\end{equation*}
The $m\times 2^n$ matrix game has payoff matrix with $(i,T)$ entry given by
\begin{equation}\label{payoff-form}
a_{i,T}:=p_i(0)a_i(0)+\sum_{l\in T}p_i(l)a_{i,1}(l)+\sum_{l\in T^c}p_i(l)a_{i,0}(l)
\end{equation}
for $i\in\{0,1,\ldots,m-1\}$ and $T\subset\{1,2,\ldots,n\}$.  Here $T^c:=\{1,2,\ldots,n\}-T$.

We define
\begin{eqnarray*}
T_0&:=&\{1\le l\le n: a_{i,0}(l)<a_{i,1}(l){\rm\ for\ }i=0,1,\ldots,m-1\},\nonumber\\
T_1&:=&\{1\le l\le n: a_{i,0}(l)>a_{i,1}(l){\rm\ for\ }i=0,1,\ldots,m-1\},\\
T_*&:=&\{1,2,\ldots,n\}-T_0-T_1,\nonumber
\end{eqnarray*}
and put $n_*:=|T_*|$.
Then, given $T\subset\{1,2,\ldots,n\}$, player II's pure strategy $T$ is strictly dominated unless $T_1\subset T\subset T_1\cup T_*$.  Therefore, the $m\times 2^n$ matrix game can be reduced to an $m\times 2^{n_*}$ matrix
game.
\end{lemma}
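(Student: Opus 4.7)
The plan is to exploit the fact that the payoff formula \eqref{payoff-form} depends on $T$ in a separable way: the contribution of each index $l\in\{1,\ldots,n\}$ to $a_{i,T}$ is $p_i(l)a_{i,1}(l)$ if $l\in T$ and $p_i(l)a_{i,0}(l)$ if $l\in T^c$. Consequently, for any fixed $l$, replacing $T$ by $T\setminus\{l\}$ (when $l\in T$) or by $T\cup\{l\}$ (when $l\notin T$) alters exactly one summand and therefore produces a clean, $i$-indexed difference on which a strict-dominance argument can be based.

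Carrying this out, for $l\in T$ one computes
\begin{equation*}
a_{i,T}-a_{i,T\setminus\{l\}}=p_i(l)\bigl[a_{i,1}(l)-a_{i,0}(l)\bigr],\qquad i=0,1,\ldots,m-1,
\end{equation*}
and, symmetrically, for $l\notin T$,
\begin{equation*}
a_{i,T\cup\{l\}}-a_{i,T}=p_i(l)\bigl[a_{i,1}(l)-a_{i,0}(l)\bigr],\qquad i=0,1,\ldots,m-1.
\end{equation*}
Since $p_i(l)>0$ for every $i$ by hypothesis, the sign of each difference, as $i$ varies, matches the sign of $a_{i,1}(l)-a_{i,0}(l)$. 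Hence, if $l\in T\cap T_0$, then $a_{i,T}>a_{i,T\setminus\{l\}}$ for every $i$, so from Player II's viewpoint (he seeks to minimize Player I's payoff) the strategy $T\setminus\{l\}$ strictly dominates $T$. Likewise, if $l\in T_1\setminus T$, then $a_{i,T\cup\{l\}}<a_{i,T}$ for every $i$, and $T\cup\{l\}$ strictly dominates $T$.

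Combining these two one-element swap observations, any $T$ that fails the condition $T_1\subset T\subset T_1\cup T_*$—either because $T\cap T_0\ne\emptyset$ or because $T_1\not\subset T$—is strictly dominated by another pure strategy of Player II. The surviving subsets are exactly those of the form $T=T_1\cup S$ with $S\subset T_*$, of which there are $2^{|T_*|}=2^{n_*}$, so the $m\times 2^n$ game reduces to an $m\times 2^{n_*}$ game. I do not anticipate any substantive obstacle: the entire argument is a single-index swap computation, and the only thing to handle carefully is the sign convention (Player II minimizes), which is precisely why strictness of the inequalities \emph{for every $i$} in the definitions of $T_0$ and $T_1$ is exactly what the conclusion requires.
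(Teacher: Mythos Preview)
Your argument is correct. The separability observation and the one-element swap computation are exactly the right ideas, the sign convention is handled properly (Player~II minimizes), and the count of surviving strategies as $2^{n_*}$ follows immediately.

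As for comparison: the paper does not actually supply a proof of this lemma. It states the result and refers the reader to Ethier (2010, p.~166) for a proof. Your self-contained argument is therefore at least as complete as what appears in the paper, and the underlying idea---that toggling membership of a single index $l$ changes $a_{i,T}$ by $p_i(l)[a_{i,1}(l)-a_{i,0}(l)]$, whose sign is uniform in $i$ precisely when $l\in T_0\cup T_1$---is the natural one and presumably also what the cited reference does.
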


\begin{remark}
The game can be thought of as follows.  Player I chooses a pure strategy $i\in\{0,1,\ldots,m-1\}$.  Let $Z_i$ be a random variable with distribution $\P(Z_i=l)=p_i(l)$ for $l=0,1,\ldots,n$.  Given that $Z_i=0$, the game is over and player I's conditional expected gain is $a_i(0)$.  If $Z_i\in\{1,2,\ldots,n\}$, then player II observes $Z_i$ (but not $i$) and based on this information chooses a ``move'' $j\in\{0,1\}$.  Given that $Z_i=l$ and player II chooses move 1 (resp., move 0), player I's conditional expected gain is $a_{i,1}(l)$ (resp., $a_{i,0}(l)$).  Thus, player II's pure strategies can be identified with subsets $T\subset\{1,2,\ldots,n\}$, with player II choosing move 1 if $Z_i\in T$ and move 0 if $Z_i\in T^c$.  The lemma implies that, regardless of player I's strategy choice, it is optimal for player II to choose move 1 if $Z_i\in T_1$ and move 0 if $Z_i\in T_0$. 
\end{remark}

We now formalize Foster's (1964) algorithm for solving $2\times 2^n$ matrix games as described in the special case of Lemma \ref{Lemma1} in which $m=2$.   (See also Foster's discussion of Kendall and Murchland 1964.)  The method is purely algebraic, so we simplify the notation slightly by defining
$$
e_i(0):=p_i(0)a_i(0),\qquad e_{i,j}(l):=p_i(l)a_{i,j}(l)
$$
for $i,j=0,1$ and $l=1,2,\ldots,n$.

\begin{lemma}\label{Lemma2}
Let $n\ge1$ and consider a $2\times 2^n$ matrix game
of the following form.  Player I has two pure strategies, labeled $0$ and $1$.
Player II has $2^n$ pure strategies, labeled by the subsets
$T\subset\{1,2,\ldots,n\}$.  There exist real numbers $e_0(0)$ and $e_1(0)$ and,
for $l=1,2,\ldots,n$, a real $2\times 2$ matrix
\begin{equation*}
\setlength{\arraycolsep}{1.5mm} 
\left(
\begin{array}{cc}
e_{0,0}(l)&e_{0,1}(l)\\
e_{1,0}(l)&e_{1,1}(l)\\
\end{array}
\right).
\end{equation*}
The $2\times 2^n$ matrix game has payoff matrix with $(i,T)$
entry given by
\begin{equation*}
a_{i,T}:=e_i(0)+\sum_{l\in T}e_{i,1}(l)+\sum_{l\in T^c}e_{i,0}(l) 
\end{equation*}
for $i\in\{0,1\}$ and $T\subset\{1,2,\ldots,n\}$.  

We define
\begin{eqnarray*}
T_{00}&:=&\{1\le l\le n: e_{0,0}(l)<e_{0,1}(l)\text{ and }e_{1,0}(l)<e_{1,1}(l)\},\nonumber\\
T_{01}&:=&\{1\le l\le n: e_{0,0}(l)<e_{0,1}(l)\text{ and }e_{1,0}(l)>e_{1,1}(l)\},\nonumber\\
T_{10}&:=&\{1\le l\le n: e_{0,0}(l)>e_{0,1}(l)\text{ and }e_{1,0}(l)<e_{1,1}(l)\},\nonumber\\
T_{11}&:=&\{1\le l\le n: e_{0,0}(l)>e_{0,1}(l)\text{ and }e_{1,0}(l)>e_{1,1}(l)\},\nonumber
\end{eqnarray*}
and assume that $T_{00}\cup T_{01}\cup T_{10}\cup T_{11}=\{1,2,\ldots,n\}$.
If player I uses the mixed strategy $(1-p,p)$, then player II's best response is
\begin{equation}\label{T(p)}
T(p):=T_{11}\cup\{l\in T_{01}:p(l)<p\}\cup\{l\in T_{10}:p(l)>p\},
\end{equation}
where
$$
p(l):=\frac{e_{0,0}(l)-e_{0,1}(l)}{e_{0,0}(l)-e_{0,1}(l)+e_{1,1}(l)-e_{1,0}(l)}.
$$
This player I mixed strategy and player II best response leads to a player I expected gain of
\begin{eqnarray}\label{lower-env}
V(p)&:=&(1-p)e_0+p\,e_1+\sum_{\{l\in T_{01}:p(l)<p\}\cup\{l\in T_{10}:p(l)>p\}}[(1-p)e_{0,1}(l)+p\,e_{1,1}(l)]\nonumber\\
&&\quad{}+\sum_{\{l\in T_{01}:p(l)\ge p\}\cup\{l\in T_{10}:p(l)\le p\}}[(1-p)e_{0,0}(l)+p\,e_{1,0}(l)],\qquad
\end{eqnarray}
where
$$
e_i:=e_i(0)+\sum_{l\in T_{11}}e_{i,1}(l)+\sum_{l\in T_{00}}e_{i,0}(l).
$$
The function $p\mapsto V(p)$ is the lower envelope of the family of linear functions $p\mapsto (1-p)a_{0,T}+p\,a_{1,T}$, where $T$ ranges over $T_{11}\subset T\subset T_{11}\cup T_{01}\cup T_{10}$.
Therefore, the value of the game is
$$
V:=\max_{0\le p\le1}V(p)=\max\Big(V(0),V(1),\max_{l\in T_{01}\cup T_{10}}V(p(l))\Big)=V(p^*),
$$
If the last equality uniquely determines $p^*$ and if $p^*=p(l^*)$ for a unique $l^*\in T_{01}\cup T_{10}$, then player I's unique optimal strategy is $(1-p^*,p^*)$ and the two columns of the kernel are uniquely specified as $T(p^*)$ and $T(p^*)\cup\{l^*\}$.  Their unique optimal mixture $(1-q^*,q^*)$ is obtained by solving the $2\times2$ kernel.
\end{lemma}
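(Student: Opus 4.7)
The plan is to exploit the additive structure of $a_{i,T}$ across the index $l$: against a mixed strategy $(1-p,p)$ for player I, player II's expected payoff splits as a constant plus a sum of $n$ independent contributions, one for each binary decision $l\in T$ vs.\ $l\in T^c$. Consequently player II's best response is obtained by solving $n$ independent sign problems in $p$.

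First I would show that, for each $l$, including $l$ in $T$ is strictly preferable for player II (i.e.\ lowers player I's expected payoff) iff $(1-p)[e_{0,1}(l)-e_{0,0}(l)]+p[e_{1,1}(l)-e_{1,0}(l)]<0$. Going through the four cases defining $T_{00},T_{01},T_{10},T_{11}$: for $l\in T_{00}$ (resp.\ $l\in T_{11}$) this expression has constant sign for $p\in[0,1]$, forcing $l\notin T$ (resp.\ $l\in T$); for $l\in T_{01}$ and $l\in T_{10}$ the expression is affine in $p$ and changes sign exactly at $p=p(l)\in(0,1)$, with the orientation giving the two halves of the formula (\ref{T(p)}) for $T(p)$. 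Substituting $T(p)$ back into the payoff yields the formula (\ref{lower-env}) for $V(p)$.

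Next I would observe that $V(\cdot)$ is the pointwise minimum over the admissible columns $T$ (those with $T_{11}\subset T\subset T_{11}\cup T_{01}\cup T_{10}$; all others are strictly dominated by Lemma \ref{Lemma1} and cannot attain the minimum) of finitely many affine functions of $p$. Hence $V$ is continuous, concave, and piecewise linear on $[0,1]$, and its break points are precisely the $p(l)$ for $l\in T_{01}\cup T_{10}$, since those are exactly the values of $p$ at which $T(p)$ changes. The maximum of a concave piecewise linear function on a compact interval is attained at an endpoint or a break point, which yields the displayed formula $V=\max(V(0),V(1),\max_{l\in T_{01}\cup T_{10}}V(p(l)))$; by the minimax theorem, this is the value of the game.

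The uniqueness assertion then follows by a standard argument. A mixed strategy $(1-p,p)$ for player I is optimal iff $V(p)=V$, so uniqueness of $p^*$ identifies player I's unique optimal strategy. At $p=p^*$, a pure strategy $T$ for player II can appear with positive weight in an optimal mixed strategy only if $(1-p^*)a_{0,T}+p^*a_{1,T}=V$; when $p^*=p(l^*)$ for a unique $l^*\in T_{01}\cup T_{10}$, exactly two such columns exist, namely $T(p^*)$ and $T(p^*)\cup\{l^*\}$, so the kernel is $2\times 2$ and $(1-q^*,q^*)$ is read off by solving it. The main obstacle is the disciplined four-case sign analysis that identifies $T(p)$ and the thresholds $p(l)$; once that is in hand, the remainder is routine minimax calculus on a concave piecewise linear function.
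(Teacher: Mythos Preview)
Your proposal is correct and follows essentially the same route as the paper's own proof: separate the best-response problem into $n$ independent sign comparisons in $p$, read off $T(p)$ from the four cases, and then maximize the resulting piecewise linear lower envelope over $[0,1]$ at the endpoints and break points $p(l)$. The paper's proof is much terser (it simply states the inclusion criterion for $T(p)$ and notes that $V$ is continuous and piecewise linear), while you additionally invoke concavity of $V$ as a minimum of affine functions and spell out the uniqueness argument for the kernel; these are elaborations rather than a different approach.
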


\begin{proof}
Notice that $l$ belongs to $T(p)$ if $l\in T_{11}$ or if both $l\in T_{01}\cup T_{10}$ and
$$
(1-p)e_{0,1}(l)+p\,e_{1,1}(l)<(1-p)e_{0,0}(l)+p\,e_{1,0}(l),
$$
implying (\ref{T(p)}).  The function (\ref{lower-env}) is continuous and piecewise linear, hence is maximized at 0, 1, or one of the points $p(l)$ at which its slope changes.  The remaining conclusions of the lemma follow easily.
\end{proof}

\section{Evaluation of the payoff matrix}\label{payoff matrix}

We begin by considering the game under Model B3.  Let $X_1\le X_2$ be the values of the two cards dealt to Player and $Y_1\le Y_2$ the values of the two cards dealt to Banker.  Define the function $M:\{0,1,\ldots\}\mapsto\{0,1,\ldots,9\}$ by $M(i):=\text{Mod}(i,10)$.  Then $X:=M(X_1+X_2)$ is Player's two-card total and $Y:=M(Y_1+Y_2)$ is Banker's two-card total.  On the event $\{X\le7,\;Y\le7\}$, let $X_3$  denote the value of Player's third card if he draws, and let $X_3:=\varnothing$ if he stands.  Similarly, let $Y_3$ denote the value of Banker's third card if he draws, and let $Y_3:=\varnothing$ if he stands.  

As the rules specify, Player's pure strategies can be indexed by the sets $S$ satisfying
\begin{eqnarray}\label{S constraints}
&&\{(i_1,i_2):0\le i_1\le i_2\le9: M(i_1+i_2)\le4\}\nonumber\\
&&\qquad{}\subset S\subset\{(i_1,i_2):0\le i_1\le i_2\le9: M(i_1+i_2)\le5\}.
\end{eqnarray}
Assuming $X\le 7$ and $Y\le7$, Player draws if $(X_1,X_2)\in S$ and stands otherwise.  Since the set of pairs $(i_1,i_2)$ satisfying $0\le i_1\le i_2\le9$ and $M(i_1+i_2)=5$ contains $(0,5)$, $(1,4)$, $(2,3)$, $(6,9)$, and $(7,8)$, it follows that Player has $2^5$ pure strategies.

On the other hand, Banker's pure strategies can be indexed by the sets $T$ satisfying
\begin{equation}\label{T constraints}
T\subset\{(j_1,j_2):0\le j_1\le j_2\le9,\, M(j_1+j_2)\le7\}\times\{0,1,2,\ldots,9,\varnothing\}.
\end{equation}
Assuming $X\le 7$ and $Y\le7$, Banker draws if $(Y_1,Y_2,X_3)\in T$ and stands otherwise.  Since there are 44 pairs $(j_1,j_2)$ satisfying $0\le j_1\le j_2\le 9$ and $M(j_1+j_2)\le7$, and since $44\times11=484$, it follows that Banker has $2^{484}$ pure strategies.

Thus, baccara is a $2^5\times 2^{484}$ matrix game.  Let us denote by $G_{S,T}$ Player's profit from a one-unit bet when he adopts pure strategy $S$ and Banker adopts pure strategy $T$, so that $a_{S,T}:=\E[G_{S,T}]$ is the $(S,T)$ entry in the payoff matrix.
Then
\begin{eqnarray}\label{a_{S,T}prelim}
a_{S,T}&=&\E[G_{S,T}]\nonumber\\
&=&\P(X\in\{8,9\},\;X>Y)-\P(Y\in\{8,9\},\;Y>X)\nonumber\\
&&\quad{}+\E[G_{S,T}\,1_{\{X\le7,\;Y\le7\}}]\nonumber\\
&=&\E[G_{S,T}\,1_{\{X\le7,\;Y\le7\}}]\nonumber\\
&=&\sum_{M(j_1+j_2)\le7}\sum_{k=0}^9\P((X_1,X_2)\in S,\; (Y_1,Y_2)=(j_1,j_2),\; X_3=k)\nonumber\\
&&\qquad\quad{}\cdot\E[G_{S,T}\mid (X_1,X_2)\in S,\; (Y_1,Y_2)=(j_1,j_2),\; X_3=k]\nonumber\\
&&{}+\sum_{M(j_1+j_2)\le7}\P((X_1,X_2)\in S^c,\; (Y_1,Y_2)=(j_1,j_2),\; X_3=\varnothing)\nonumber\\
&&\qquad\quad{}\cdot\E[G_{S,T}\mid (X_1,X_2)\in S^c,\; (Y_1,Y_2)=(j_1,j_2),\; X_3=\varnothing],
\end{eqnarray}
where
$S^c:=\{(i_1,i_2):0\le i_1\le i_2\le9,\, M(i_1+i_2)\le7\}-S$.

Let us now define, for $S$ and $T$, for $(j_1,j_2)$ satisfying $0\le j_1\le j_2\le9$ and $M(j_1+j_2)\le7$, and for $k\in\{0,1,\ldots,9\}$,
\begin{equation}\label{a_{S,l}}
\begin{split}
a_{S,l}(j_1,j_2,k)&:=\E[G_{S,T}\mid (X_1,X_2)\in S,\; (Y_1,Y_2)=(j_1,j_2),\; X_3=k],\\
a_{S,l}(j_1,j_2,\varnothing)&:=\E[G_{S,T}\mid (X_1,X_2)\in S^c,\; (Y_1,Y_2)=(j_1,j_2),\;X_3=\varnothing],
\end{split}
\end{equation}
where $l=1$ if $(j_1,j_2,k)$ (resp., $(j_1,j_2,\varnothing)$) belongs to $T$; and $l=0$
if $(j_1,j_2,k)$ (resp., $(j_1,j_2,\varnothing)$) belongs to
$T^c$ (the complement of $T$ relative to (\ref{T constraints})).  Defining also
\begin{equation*}
\begin{split}
p_S(j_1,j_2,k)&:=\P((X_1,X_2)\in S,\; (Y_1,Y_2)=(j_1,j_2),\; X_3=k),\\
p_S(j_1,j_2,\varnothing)&:=\P((X_1,X_2)\in S^c,\; (Y_1,Y_2)=(j_1,j_2),\; X_3=\varnothing),
\end{split}
\end{equation*}
we have, from (\ref{a_{S,T}prelim}),
\begin{eqnarray}\label{a_{S,T}}
a_{S,T}&=&\sum_{(j_1,j_2,k)\in T\text{ with }k\ne\varnothing}p_S(j_1,j_2,k)a_{S,1}(j_1,j_2,k)\nonumber\\
&&\qquad{}+\sum_{(j_1,j_2,k)\in T^c\text{ with }k\ne\varnothing}p_S(j_1,j_2,k)a_{S,0}(j_1,j_2,k)\nonumber\\
&&\qquad{}+\sum_{(j_1,j_2,\varnothing)\in T}p_S(j_1,j_2,\varnothing)a_{S,1}(j_1,j_2,\varnothing)\nonumber\\
&&\qquad{}+\sum_{(j_1,j_2,\varnothing)\in T^c}p_S(j_1,j_2,\varnothing)a_{S,0}(j_1,j_2,\varnothing).
\end{eqnarray}

To evaluate the conditional expectations in (\ref{a_{S,l}}), we condition on $(X_1,X_2)$:
\begin{eqnarray}\label{a_{S,l}draw}
&&a_{S,l}(j_1,j_2,k)\nonumber\\
&&{}=\sum_{(i_1,i_2)\in S}\P((X_1,X_2)=(i_1,i_2)\mid (X_1,X_2)\in S,\;(Y_1,Y_2)=(j_1,j_2),\; X_3=k)\nonumber\\
&&\qquad\qquad{}\cdot\E[G_{S,T}\mid (X_1,X_2)=(i_1,i_2),\; (Y_1,Y_2)=(j_1,j_2),\; X_3=k]
\end{eqnarray}
if $k\ne\varnothing$ and
\begin{eqnarray}\label{a_{S,l}stand}
&&a_{S,l}(j_1,j_2,\varnothing)\nonumber\\
&&{}=\sum_{(i_1,i_2)\in S^c}\P((X_1,X_2)=(i_1,i_2)\mid (X_1,X_2)\in S^c,\;(Y_1,Y_2)=(j_1,j_2),\; X_3=\varnothing)\nonumber\\
&&\qquad\qquad{}\cdot\E[G_{S,T}\mid (X_1,X_2)=(i_1,i_2),\; (Y_1,Y_2)=(j_1,j_2),\; X_3=\varnothing].
\end{eqnarray}
To evaluate the conditional expectations in (\ref{a_{S,l}draw}) and (\ref{a_{S,l}stand}), there are four cases to consider:

Case 1.  $(i_1,i_2)\in S$, $(j_1,j_2,k)\in T$ with $k\ne\varnothing$ (both Player and Banker draw).  Here, for $d$ decks,
\begin{eqnarray}\label{case 1, d finite}
&&\E[G_{S,T}\mid (X_1,X_2)=(i_1,i_2),\; (Y_1,Y_2)=(j_1,j_2),\; X_3=k]\nonumber\\
&&\quad{}=\sum_{l=0}^9\frac{4d(1+3\delta_{l,0})-\delta_{l,i_1}-\delta_{l,i_2}-\delta_{l,j_1}-\delta_{l,j_2}-\delta_{l,k}}{52d-5}\nonumber\\
&&\qquad\qquad\qquad{}\cdot{\rm sgn}(M(i_1+i_2+k)-M(j_1+j_2+l)),
\end{eqnarray}
which becomes, as $d\to\infty$, 
\begin{equation}\label{case 1, d infinite}
=\sum_{l=0}^9\frac{1+3\delta_{l,0}}{13}\,{\rm sgn}(M(i_1+i_2+k)-M(j_1+j_2+l)).
\end{equation}

Case 2.  $(i_1,i_2)\in S$, $(j_1,j_2,k)\in T^c$ with $k\ne\varnothing$ (Player draws, Banker stands).  Regardless of the number of decks,
\begin{eqnarray*}
&&\E[G_{S,T}\mid (X_1,X_2)=(i_1,i_2),\; (Y_1,Y_2)=(j_1,j_2),\; X_3=k]\nonumber\\
&&\quad{}={\rm sgn}(M(i_1+i_2+k)-M(j_1+j_2)).
\end{eqnarray*}

Case 3.  $(i_1,i_2)\in S^c$, $(j_1,j_2,\varnothing)\in T$ (Player stands, Banker draws).  For $d$ decks,
\begin{eqnarray}\label{case 3, d finite}
&&\E[G_{S,T}\mid (X_1,X_2)=(i_1,i_2),\; (Y_1,Y_2)=(j_1,j_2),\; X_3=\varnothing]\nonumber\\
&&\quad{}=\sum_{l=0}^9\frac{4d(1+3\delta_{l,0})-\delta_{l,i_1}-\delta_{l,i_2}-\delta_{l,j_1}-\delta_{l,j_2}}{52d-4}\nonumber\\
&&\qquad\qquad\qquad{}\cdot{\rm sgn}(M(i_1+i_2)-M(j_1+j_2+l)),
\end{eqnarray}
which becomes, as $d\to\infty$, 
\begin{equation}\label{case 3, d infinite}
=\sum_{l=0}^9\frac{1+3\delta_{l,0}}{13}\,{\rm sgn}(M(i_1+i_2)-M(j_1+j_2+l)).
\end{equation}

Case 4.  $(i_1,i_2)\in S^c$, $(j_1,j_2,\varnothing)\in T^c$ (both Player and Banker stand).  Regardless of the number of decks,
\begin{eqnarray*}
&&\E[G_{S,T}\mid (X_1,X_2)=(i_1,i_2),\; (Y_1,Y_2)=(j_1,j_2),\; X_3=\varnothing]\nonumber\\
&&\quad{}={\rm sgn}(M(i_1+i_2)-M(j_1+j_2)).
\end{eqnarray*}

Finally, to evaluate
\begin{equation*}
\P((X_1,X_2)=(i_1,i_2)\mid (X_1,X_2)\in S,\; (Y_1,Y_2)=(j_1,j_2),\; X_3=k),
\end{equation*}
we begin with a full $d$-deck shoe except for three cards, one $j_1$, one $j_2$, and one $k$, removed.  It will comprise $m_0$ 0s, $m_1$ 1s, \dots, and $m_9$ 9s, where
$$
m_r:=4d(1+3\delta_{r,0})-\delta_{r,j_1}-\delta_{r,j_2}-\delta_{r,k},\quad r=0,1,\ldots,9.
$$
The number of equally likely two-card hands is $\binom{52d-3}{2}$, and the number of those that belong to $S$ is
$$
m:=\sum_{(u_1,u_2)\in S}\frac{m_{u_1}(m_{u_2}-\delta_{u_2,u_1})}{1+\delta_{u_1,u_2}}.
$$
Then
\begin{eqnarray}\label{wo-repl1}
&&\P((X_1,X_2)=(i_1,i_2)\mid (X_1,X_2)\in S,\; (Y_1,Y_2)=(j_1,j_2),\; X_3=k)\nonumber\\
&&\quad{}=\frac{m_{i_1}(m_{i_2}-\delta_{i_2,i_1})}{1+\delta_{i_1,i_2}}\,\frac{1_S((i_1,i_2))}{m}
\end{eqnarray}
and
\begin{eqnarray}\label{wo-repl2}
p_S(j_1,j_2,k)&=&\P((Y_1,Y_2)=(j_1,j_2),\; X_3=k)\nonumber\\
&&\quad{}\cdot\P((X_1,X_2)\in S\mid(Y_1,Y_2)=(j_1,j_2),\; X_3=k)\nonumber\\
&=&\frac{(2-\delta_{j_1,j_2})4d(1+3\delta_{j_1,0})[4d(1+3\delta_{j_2,0})-\delta_{j_2,j_1}]}{(52d)_2}\nonumber\\
&&\quad{}\cdot\frac{4d(1+3\delta_{k,0})-\delta_{k,j_1}-\delta_{k,j_2}}{52d-2}\,\frac{m}{\binom{52d-3}{2}}.
\end{eqnarray}

Also, to evaluate
\begin{equation*}
\P((X_1,X_2)=(i_1,i_2)\mid (X_1,X_2)\in S^c,\; (Y_1,Y_2)=(j_1,j_2),\; X_3=\varnothing),
\end{equation*}
we begin with a full $d$-deck shoe except for two cards, one $j_1$ and one $j_2$, removed.  It will comprise $m_0'$ 0s, $m_1'$ 1s, \dots, and $m_9'$ 9s, where
$$
m_r':=4d(1+3\delta_{r,0})-\delta_{r,j_1}-\delta_{r,j_2},\quad r=0,1,\ldots,9.
$$
The number of equally likely two-card hands is $\binom{52d-2}{2}$, and the number of those that belong to $S^c$ is
$$
m':=\sum_{(u_1,u_2)\in S^c}\frac{m_{u_1}'(m_{u_2}'-\delta_{u_2,u_1})}{1+\delta_{u_1,u_2}}.
$$
Then
\begin{eqnarray}\label{wo-repl3}
&&\P((X_1,X_2)=(i_1,i_2)\mid (X_1,X_2)\in S^c,\; (Y_1,Y_2)=(j_1,j_2),\; X_3=\varnothing)\nonumber\\
&&\quad{}=\frac{m_{i_1}'(m_{i_2}'-\delta_{i_2,i_1})}{1+\delta_{i_1,i_2}}\,\frac{1_{S^c}((i_1,i_2))}{m'}
\end{eqnarray}
and
\begin{eqnarray}\label{wo-repl4}
p_S(j_1,j_2,\varnothing)&=&\P((Y_1,Y_2)=(j_1,j_2),\; X_3=\varnothing)\nonumber\\
&&\quad{}\cdot\P((X_1,X_2)\in S^c\mid(Y_1,Y_2)=(j_1,j_2),\; X_3=\varnothing)\\
&=&\frac{(2-\delta_{j_1,j_2})4d(1+3\delta_{j_1,0})[4d(1+3\delta_{j_2,0})-\delta_{j_2,j_1}]}{(52d)_2}\,
\frac{m'}{\binom{52d-2}{2}}.\nonumber
\end{eqnarray}
This suffices to complete the evaluation of (\ref{a_{S,l}draw}) and (\ref{a_{S,l}stand}) when cards are dealt without replacement from a $d$-deck shoe.

The assumption that cards are dealt \textit{with} replacement from a single deck can be modeled by letting $d\to\infty$ in the assumption that cards are dealt without replacement from a $d$-deck shoe.  The formulas are simpler in this case:
\begin{eqnarray}\label{with-repl1}
&&\P((X_1,X_2)=(i_1,i_2)\mid (X_1,X_2)\in S,\; (Y_1,Y_2)=(j_1,j_2),\; X_3=k)\nonumber\\
&&\quad{}=\frac{(2-\delta_{i_1,i_2})(1+3\delta_{i_1,0})(1+3\delta_{i_2,0})1_S((i_1,i_2))}{89+8\,|S\cap\{(0,5)\}|+2\,|S\cap\{(1,4),(2,3),(6,9),(7,8)\}|},
\end{eqnarray}
\begin{eqnarray}\label{with-repl2}
&&\!\!\!\!\!\!\!\!\!\!\!\!p_S(j_1,j_2,k)\nonumber\\
&=&\frac{(2-\delta_{j_1,j_2})(1+3\delta_{j_1,0})(1+3\delta_{j_2,0})}{(13)^2}\,\frac{1+3\delta_{k,0}}{13}\nonumber\\
&&\quad{}\cdot \frac{89+8\,|S\cap\{(0,5)\}|+2\,|S\cap\{(1,4),(2,3),(6,9),(7,8)\}|}{(13)^2},
\end{eqnarray}
\begin{eqnarray}\label{with-repl3}
&&\P((X_1,X_2)=(i_1,i_2)\mid (X_1,X_2)\in S^c,\; (Y_1,Y_2)=(j_1,j_2),\; X_3=\varnothing)\nonumber\\
&&\quad{}=\frac{(2-\delta_{i_1,i_2})(1+3\delta_{i_1,0})(1+3\delta_{i_2,0})1_{S^c}((i_1,i_2))}{32+8\,|S^c\cap\{(0,5)\}|+2\,|S^c\cap\{(1,4),(2,3),(6,9),(7,8)\}|},
\end{eqnarray}
\begin{eqnarray}\label{with-repl4}
&&\!\!\!\!\!\!\!\!\!\!\!\!p_S(j_1,j_2,\varnothing)\nonumber\\
&=&\frac{(2-\delta_{j_1,j_2})(1+3\delta_{j_1,0})(1+3\delta_{j_2,0})}{(13)^2}\nonumber\\
&&\quad{}\cdot\frac{32+8\,|S^c\cap\{(0,5)\}|+2\,|S^c\cap\{(1,4),(2,3),(6,9),(7,8)\}|}{(13)^2}.
\end{eqnarray}
Here 89 comes from $25+16+16+16+16$, where the summands correspond to totals $0,1,2,3,4$; 32 is $16+16$, corresponding to totals 6 and 7. 

In summary, we can evaluate (\ref{a_{S,T}}) under Model B3 or A3.  Restricting $S$ to the two extremes in (\ref{S constraints}), we obtain (\ref{a_{S,T}}) under Model B2 or A2 as a special case.  Finally, as for Models B1 and A1, we can derive the analogue of (\ref{a_{S,T}}) from results already obtained.  Specifically,
\begin{eqnarray*}
a_{S^\circ,T^\circ}&=&\sum_{(j,k)\in T^\circ\text{ with }k\ne\varnothing}p_{S^\circ}(j,k)a_{S^\circ,1}(j,k)\\
&&\qquad{}+\sum_{(j,k)\in (T^\circ)^c\text{ with }k\ne\varnothing}p_{S^\circ}(j,k)a_{S^\circ,0}(j,k)\\
&&\qquad{}+\sum_{(j,\varnothing)\in T^\circ}p_{S^\circ}(j,\varnothing)a_{S^\circ,1}(j,\varnothing)\\
&&\qquad{}+\sum_{(j,\varnothing)\in (T^\circ)^c}p_{S^\circ}(j,\varnothing)a_{S^\circ,0}(j,\varnothing),
\end{eqnarray*}
where
\begin{eqnarray*}
a_{S^\circ,l}(j,k)&:=&P(G_{S^\circ,T^\circ}\mid X\in S^\circ,\,Y=j,\,X_3=k)\\
&\;=&\sum_{M(j_1+j_2)=j}p_S(j_1,j_2,k)a_{S,l}(j_1,j_2,k)/p_{S^\circ}(j,k),
\end{eqnarray*}
\begin{eqnarray*}
a_{S^\circ,l}(j,\varnothing)&:=&P(G_{S^\circ,T^\circ}\mid X\in (S^\circ)^c,\,Y=j,\,X_3=\varnothing)\\
&\;=&\sum_{M(j_1+j_2)=j}p_S(j_1,j_2,\varnothing)a_{S,l}(j_1,j_2,\varnothing)/p_{S^\circ}(j,\varnothing),
\end{eqnarray*}
\begin{eqnarray*}
p_{S^\circ}(j,k)&=&P(X\in S^\circ,\,Y=j,\,X_3=k)=\sum_{M(j_1+j_2)=j}p_S(j_1,j_2,k),
\end{eqnarray*}
and
\begin{eqnarray*}
p_{S^\circ}(j,\varnothing)&=&P(X\in (S^\circ)^c,\,Y=j,\,X_3=\varnothing)=\sum_{M(j_1+j_2)=j}p_S(j_1,j_2,\varnothing);
\end{eqnarray*}
here $S^\circ=\{0,1,2,3,4\}$ or $\{0,1,2,3,4,5\}$, $(S^\circ)^c=\{0,1,\ldots,7\}-S^\circ$, and $T^\circ\subset\{0,1,\ldots,7\}\times\{0,1,\ldots,9,\varnothing\}$, while $S$ and $T$ are the corresponding subsets of $\{(i_1,i_2):0\le i_1\le i_2\le9,\,M(i_1+i_2)\le7\}$ and $\{(j_1,j_2):0\le j_1\le j_2\le9,\,M(j_1+j_2)\le7\}\times\{0,1,\ldots,9,\varnothing\}$, respectively.

\section{Banker's strictly dominated pure strategies}\label{reduce}

Our next step is to show that Lemma~\ref{Lemma1} applies (with Player and Banker playing the roles of player I and player II, respectively), allowing us to reduce the game to a more manageable size.  The payoff matrix (\ref{a_{S,T}}) has the form (\ref{payoff-form}) with $m=32$, $n=484$, $p_i(0)=\P(X\in\{8,9\}{\rm\ or\ }Y\in\{8,9\})$, and $a_i(0)=0$.  It remains to evaluate $T_0$, $T_1$, and $T_*$ of the lemma.  

Results are summarized in Table~\ref{reduction B3}.  $T_1$ (resp., $T_0$) is the set of triples $(j_1,j_2,k)$ for which $a_{S,1}(j_1,j_2,k)<a_{S,0}(j_1,j_2,k)$ (resp., $>$) for each of Player's $2^5$ pure strategies $S$, indicated by a D (resp., S) in the corresponding entry of the table.  $T_*$ is the remaining set of triples $(j_1,j_2,k)$, indicated by a $*$ in the corresponding entry of the table.  Of particular interest is $n_d:=|T_*|$.

\begin{table}[htb]
\caption{\label{reduction B3}Banker's optimal move (preliminary version) under Model B2 or B3 with $d=6$, indicated by D (draw) or S (stand), except in the $n_6=18$ cases indicated by $*$ in which it depends on Player's strategy.  Adjustments to the table for other positive integers $d$ are specified by footnotes.   Under Model A2 or A3, results are the same as those under Model B2 or B3 with $d\ge11$.\medskip}
\catcode`@=\active\def@{\phantom{0}}
\catcode`#=\active\def#{\phantom{$^0$}}
\begin{center}
\begin{tabular}{ccccccccccccccc}
\hline
\noalign{\smallskip}
\multicolumn{2}{c}{Banker's}&&\multicolumn{11}{c}{Player's third card ($\varnothing$ if Player stands)}\\
\multicolumn{2}{c}{two-card}&&&&&&&&&&&&\\
total & hand && 0# & 1# & 2# & 3# & 4# & 5# & 6# & 7# & 8# & 9# & $\varnothing$ \\
\noalign{\smallskip} \hline
\noalign{\smallskip}
$0,1,2$&    && \cellcolor[gray]{0.85}D# & \cellcolor[gray]{0.85}D# & \cellcolor[gray]{0.85}D# & \cellcolor[gray]{0.85}D# & \cellcolor[gray]{0.85}D# & \cellcolor[gray]{0.85}D# & \cellcolor[gray]{0.85}D# & \cellcolor[gray]{0.85}D# & \cellcolor[gray]{0.85}D# & \cellcolor[gray]{0.85}D# & \cellcolor[gray]{0.85}D\\
\noalign{\smallskip} \hline
\noalign{\smallskip}
3&$(0,3)$&&    \cellcolor[gray]{0.85}D# & \cellcolor[gray]{0.85}D# & \cellcolor[gray]{0.85}D# & \cellcolor[gray]{0.85}D# & \cellcolor[gray]{0.85}D# & \cellcolor[gray]{0.85}D# & \cellcolor[gray]{0.85}D# & \cellcolor[gray]{0.85}D# & S$^2$ & $*$# & \cellcolor[gray]{0.85}D\\
3&$(1,2)$&&    \cellcolor[gray]{0.85}D# & \cellcolor[gray]{0.85}D# & \cellcolor[gray]{0.85}D# & \cellcolor[gray]{0.85}D# & \cellcolor[gray]{0.85}D# & \cellcolor[gray]{0.85}D# & \cellcolor[gray]{0.85}D# & \cellcolor[gray]{0.85}D# & S$^1$ & $*$# & \cellcolor[gray]{0.85}D\\
3&$(4,9)$&&    \cellcolor[gray]{0.85}D# & \cellcolor[gray]{0.85}D# & \cellcolor[gray]{0.85}D# & \cellcolor[gray]{0.85}D# & \cellcolor[gray]{0.85}D# & \cellcolor[gray]{0.85}D# & \cellcolor[gray]{0.85}D# & \cellcolor[gray]{0.85}D# & S$^5$ & $*$# & \cellcolor[gray]{0.85}D\\
3&$(5,8)$&&    \cellcolor[gray]{0.85}D# & \cellcolor[gray]{0.85}D# & \cellcolor[gray]{0.85}D# & \cellcolor[gray]{0.85}D# & \cellcolor[gray]{0.85}D# & \cellcolor[gray]{0.85}D# & \cellcolor[gray]{0.85}D# & \cellcolor[gray]{0.85}D# & S$^3$ & $*$# & \cellcolor[gray]{0.85}D\\
3&$(6,7)$&&    \cellcolor[gray]{0.85}D# & \cellcolor[gray]{0.85}D# & \cellcolor[gray]{0.85}D# & \cellcolor[gray]{0.85}D# & \cellcolor[gray]{0.85}D# & \cellcolor[gray]{0.85}D# & \cellcolor[gray]{0.85}D# & \cellcolor[gray]{0.85}D# & S$^3$ & $*$# & \cellcolor[gray]{0.85}D\\
\noalign{\smallskip} \hline
\noalign{\smallskip}
4&$(0,4)$&&    S# &  S$^8$  & \cellcolor[gray]{0.85}D$^1$ & \cellcolor[gray]{0.85}D# & \cellcolor[gray]{0.85}D# & \cellcolor[gray]{0.85}D# & \cellcolor[gray]{0.85}D# & \cellcolor[gray]{0.85}D# & S# & S# & \cellcolor[gray]{0.85}D\\
4&$(1,3)$&&    S# &  S$^7$  & \cellcolor[gray]{0.85}D$^1$ & \cellcolor[gray]{0.85}D# & \cellcolor[gray]{0.85}D# & \cellcolor[gray]{0.85}D# & \cellcolor[gray]{0.85}D# & \cellcolor[gray]{0.85}D# & S# & S# & \cellcolor[gray]{0.85}D\\
4&$(2,2)$&&    S# &  $*^3$  & \cellcolor[gray]{0.85}D$^1$ & \cellcolor[gray]{0.85}D# & \cellcolor[gray]{0.85}D# & \cellcolor[gray]{0.85}D# & \cellcolor[gray]{0.85}D# & \cellcolor[gray]{0.85}D# & S# & S# & \cellcolor[gray]{0.85}D\\
4&$(5,9)$&&    S# &  S$^7$  & \cellcolor[gray]{0.85}D$^1$ & \cellcolor[gray]{0.85}D# & \cellcolor[gray]{0.85}D# & \cellcolor[gray]{0.85}D# & \cellcolor[gray]{0.85}D# & \cellcolor[gray]{0.85}D# & S# & S# & \cellcolor[gray]{0.85}D\\
4&$(6,8)$&&    S# &  $*$#    & \cellcolor[gray]{0.85}D#   & \cellcolor[gray]{0.85}D# & \cellcolor[gray]{0.85}D# & \cellcolor[gray]{0.85}D# & \cellcolor[gray]{0.85}D# & \cellcolor[gray]{0.85}D# & S# & S# & \cellcolor[gray]{0.85}D\\
4&$(7,7)$&&    S# &  $*$#    & \cellcolor[gray]{0.85}D#   & \cellcolor[gray]{0.85}D# & \cellcolor[gray]{0.85}D# & \cellcolor[gray]{0.85}D# & \cellcolor[gray]{0.85}D# & \cellcolor[gray]{0.85}D# & S# & S# & \cellcolor[gray]{0.85}D\\
\noalign{\smallskip} \hline
\noalign{\smallskip}
5&$(0,5)$&&    S# & S# & S# & S# & $*^1$ & \cellcolor[gray]{0.85}D# & \cellcolor[gray]{0.85}D# & \cellcolor[gray]{0.85}D# & S# & S# & \cellcolor[gray]{0.85}D\\
5&$(1,4)$&&    S# & S# & S# & S# & S$^7$ & \cellcolor[gray]{0.85}D# & \cellcolor[gray]{0.85}D# & \cellcolor[gray]{0.85}D# & S# & S# & \cellcolor[gray]{0.85}D\\
5&$(2,3)$&&    S# & S# & S# & S# & S$^8$ & \cellcolor[gray]{0.85}D# & \cellcolor[gray]{0.85}D# & \cellcolor[gray]{0.85}D# & S# & S# & \cellcolor[gray]{0.85}D\\
5&$(6,9)$&&    S# & S# & S# & S# & $*^1$ & \cellcolor[gray]{0.85}D# & \cellcolor[gray]{0.85}D# & \cellcolor[gray]{0.85}D# & S# & S# & \cellcolor[gray]{0.85}D\\
5&$(7,8)$&&    S# & S# & S# & S# & $*^1$ & \cellcolor[gray]{0.85}D# & \cellcolor[gray]{0.85}D# & \cellcolor[gray]{0.85}D# & S# & S# & \cellcolor[gray]{0.85}D\\
\noalign{\smallskip} \hline
\noalign{\smallskip}
6&$(0,6)$&&    S# & S# & S# & S# & S# & S# &  \cellcolor[gray]{0.85}D#  & \cellcolor[gray]{0.85}D# & S# & S# & $*$\\
6&$(1,5)$&&    S# & S# & S# & S# & S# & S# &  \cellcolor[gray]{0.85}D#  & \cellcolor[gray]{0.85}D# & S# & S# & $*$\\
6&$(2,4)$&&    S# & S# & S# & S# & S# & S# &  \cellcolor[gray]{0.85}D#  & \cellcolor[gray]{0.85}D# & S# & S# & $*$\\
6&$(3,3)$&&    S# & S# & S# & S# & S# & S# & $*^{11}$ & \cellcolor[gray]{0.85}D# & S# & S# & $*$\\
6&$(7,9)$&&    S# & S# & S# & S# & S# & S# &  \cellcolor[gray]{0.85}D#  & \cellcolor[gray]{0.85}D# & S# & S# & $*$\\
6&$(8,8)$&&    S# & S# & S# & S# & S# & S# &  \cellcolor[gray]{0.85}D#  & \cellcolor[gray]{0.85}D# & S# & S# & $*$\\
\noalign{\smallskip} \hline
\noalign{\smallskip}
7&     &&       S# & S# & S# & S# & S# & S# & S# & S# & S# & S# & S\\
\noalign{\smallskip}
\hline
\noalign{\smallskip}
\multicolumn{14}{l}{$^1$Replace S by $*$ and D by $*$ and $*$ by S if $d=1$.}\\
\multicolumn{8}{l}{$^2$Replace S by $*$ if $d\le2$.}&\multicolumn{6}{l}{$^7$Replace S by $*$ if $d\ge7$.}\\
\multicolumn{8}{l}{$^3$Replace S by $*$ and $*$ by S if $d\le3$.}&\multicolumn{6}{l}{$^8$Replace S by $*$ if $d\ge8$.}\\
\multicolumn{8}{l}{$^5$Replace S by $*$ if $d\le5$.}&\multicolumn{6}{l}{$^{11}$Replace $*$ by D if $d\ge11$.}\\
\end{tabular}
\end{center}
\end{table}
\afterpage{\clearpage}

\begin{theorem}\label{thm reduce B3}
\emph{(a)} Under Model B3 with the number of decks being a positive integer $d$, Lemma \ref{Lemma1} applies.  The sets $T_0$, $T_1$, and $T_*$ of the lemma can be inferred from Table \ref{reduction B3}, with entries \emph{S}, \emph{D}, and $*$ located at elements of $T_0$, $T_1$, and $T_*$, respectively.  In particular, for $d=1,2,\ldots,10$, $n_d=23$, $21$, $20$, $19$, $19$, $18$, $21$, $23$, $23$, $23$, respectively, and, if $d\ge11$, $n_d=22$.

\emph{(b)} Exactly the same conclusions hold under Model B2.

\emph{(c)} Lemma \ref{Lemma1} applies under Models A2 and A3, with the results those under Models B2 and B3 with $d\ge11$.
\end{theorem}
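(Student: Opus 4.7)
The plan is to verify directly that the payoff matrix (\ref{a_{S,T}}) fits the template of Lemma~\ref{Lemma1}, and then to compute, for each Banker information state $(j_1,j_2,k)$ or $(j_1,j_2,\varnothing)$, the sign pattern of $a_{S,1}-a_{S,0}$ as $S$ ranges over Player's $2^5$ pure strategies. This classifies each state as belonging to $T_0$, $T_1$, or $T_*$, populating Table~\ref{reduction B3} entry by entry.

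To apply Lemma~\ref{Lemma1}, I would identify its components as follows. Player plays the role of player~I with $m=32$ strategies. The index $l=0$ of the lemma corresponds to the event $\{X\in\{8,9\}\text{ or }Y\in\{8,9\}\}$, whose contribution to $a_{S,T}$ is $S$-independent with $a_i(0)=0$ by (\ref{a_{S,T}prelim}) and whose probability is strictly positive. The remaining $n=484$ indices are the Banker information states $(j_1,j_2,k)$ with $0\le j_1\le j_2\le 9$, $M(j_1+j_2)\le 7$, and $k\in\{0,1,\ldots,9,\varnothing\}$; the probabilities $p_S(j_1,j_2,k)$ and $p_S(j_1,j_2,\varnothing)$ from (\ref{wo-repl2}) and (\ref{wo-repl4}) are strictly positive and supply the $p_i(l)$ for $l\ge1$, while (\ref{a_{S,l}}) supplies the auxiliary $32\times 2$ matrices. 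Thus the lemma applies and $T_0$, $T_1$, $T_*$ are well-defined. For each of the 484 states I would then evaluate all 32 values of $a_{S,1}-a_{S,0}$ as closed-form rational functions of $d$, using the four case formulas (\ref{case 1, d finite}), Case~2, (\ref{case 3, d finite}), and Case~4 together with (\ref{a_{S,l}draw})--(\ref{a_{S,l}stand}), (\ref{wo-repl1}), and (\ref{wo-repl3}); the computation is performed symbolically in \emph{Mathematica}. Membership in $T_0$, $T_1$, or $T_*$ is then read off by inspection, and the finitely many positive integer thresholds at which the sign pattern changes---governing the footnotes of the table and the listed values of $n_d$---are located exactly.

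For part (b), Model B2 restricts $S$ to the two extremes in (\ref{S constraints}). Sign consistency across all 32 Player strategies trivially implies consistency across the two extremes, so $T_1^{B3}\subset T_1^{B2}$ and $T_0^{B3}\subset T_0^{B2}$, whence $T_*^{B2}\subset T_*^{B3}$. The nontrivial reverse inclusion $T_*^{B3}\subset T_*^{B2}$ is a byproduct of the part~(a) computation: in every $*$-cell one records which of the 32 strategies produce which sign and checks that the split already occurs between the two B2 extremes $S_{\min}$ and $S_{\max}$. For part (c), I would repeat the computation in the with-replacement regime using the simplified formulas (\ref{case 1, d infinite}), (\ref{case 3, d infinite}), and (\ref{with-repl1})--(\ref{with-repl4}), and verify that the resulting classification coincides with the $d\ge 11$ specialization of Table~\ref{reduction B3}.

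The main obstacle is bookkeeping rather than theoretical: one must faithfully compute and inspect 484 rational-in-$d$ quantities, each assembled from up to 32 conditional expectations, and correctly pinpoint every integer $d$ at which a sign flip occurs. Infinite-precision symbolic arithmetic is indispensable here, since a single borderline comparison resolved incorrectly would corrupt a table entry and potentially propagate to an erroneous value of $n_d$ or a spurious footnote.
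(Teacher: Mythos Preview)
Your approach is correct and broadly parallels the paper's: both verify that Lemma~\ref{Lemma1} applies and then classify each of the 484 Banker states by the sign of $b_u(j_1,j_2,k):=a_{u,1}(j_1,j_2,k)-a_{u,0}(j_1,j_2,k)$ as a rational function of $d$, with computer assistance. The paper, however, exploits a structural shortcut you do not mention: with only a handful of exceptions (all occurring when $d=1$), the quantity $b_u$ for $u=1,\ldots,30$ lies in the interval $[b_0\wedge b_{31},\,b_0\vee b_{31}]$, so that checking only the two extreme Player strategies $u=0$ and $u=31$ suffices to classify almost every state. This sandwich property is precisely what explains, in one stroke, why parts~(a) and~(b) yield identical tables; in your argument that equality emerges only after the full 32-strategy computation followed by a separate verification, for each $*$-cell, that the sign split already occurs between $S_{\min}$ and $S_{\max}$. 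Your brute-force route is logically complete but does roughly sixteen times the work and leaves the B2/B3 coincidence looking accidental, whereas the paper's observation both economizes the verification (reducing the problem to 19 footnoted entries requiring individual attention) and supplies conceptual insight, later tied to the mixture relation~(\ref{mixing}) under Model~A3.
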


\begin{proof}
It will occasionally be convenient to label the $2^5$ choices of $S$ by the integers 0 to 31.  Strategy 0 (resp., strategy 31) denotes Player's pure strategy of standing (resp., drawing) on a two-card total of 5, regardless of its composition.  More generally, strategy $u\in\{0,1,\ldots,31\}$ is specified by the 5-bit binary representation of $u$.  For example, strategy 19 (binary 10011) corresponds to drawing on $(0,5)$, standing on $(1,4)$ and $(2,3)$, and drawing on $(6,9)$ and $(7,8)$.

Table \ref{reduction B3} is identical under Models B2 and B3 because, defining 
$$
b_u(j_1,j_2,k):=a_{u,1}(j_1,j_2,k)-a_{u,0}(j_1,j_2,k) 
$$
for $u=0,1,\ldots,31$, $0\le j_1\le j_1\le 9$ with $M(j_1+j_2)\le7$, and $k=0,1,\ldots,9,\varnothing$,
we have, with a few exceptions,
\begin{equation*}
b_u(j_1,j_2,k)\in[b_0(j_1,j_2,k)\wedge b_{31}(j_1,j_2,k),b_0(j_1,j_2,k)\vee b_{31}(j_1,j_2,k)]
\end{equation*}
for $u=1,2,\ldots,30$.  The exceptions occur only when $d=1$ and only when $(j_1,j_2,k)=(0,0,9)$, $(5,5,9)$, or $(5,6,0)$.  (See Section \ref{B2} for an explanation of why this is to be expected.)

It follows that, if $b_0(j_1,j_2,k)<0$ and $b_{31}(j_1,j_2,k)<0$, then the $(j_1,j_2,k)$ entry in Table \ref{reduction B3} is D (draw); in 301 of the 484 entries, this property holds for every $d\ge1$.  If $b_0(j_1,j_2,k)>0$ and $b_{31}(j_1,j_2,k)>0$, then the $(j_1,j_2,k)$ entry in Table \ref{reduction B3} is S (stand); in 151 of the 484 entries, this property holds for every $d\ge1$.  If $b_0(j_1,j_2,k)>0>b_{31}(j_1,j_2,k)$ or $b_0(j_1,j_2,k)<0<b_{31}(j_1,j_2,k)$, then the $(j_1,j_2,k)$ entry in Table \ref{reduction B3} is $*$;  in 13 of the 484 entries, this property holds for every $d\ge1$.  This accounts for all but 19 entries in Table \ref{reduction B3}, those marked with footnotes, in which the sign of $b_0(j_1,j_2,k)$ or $b_{31}(j_1,j_2,k)$ depends on $d$.

For example, one of the 19 is $(3,3,6)$.  Indeed,
$$
b_{31}(3,3,6)=-\frac{2 (80\,d^3-832\,d^2+135\,d-2)}{(52\,d-5) (840\,d^2-114\,d+1)}>0\text{ (resp., $<0$)}
$$
for all $d\le10$ (resp., $d\ge11$) and
$$
b_0(3,3,6)=-\frac{2 (848\,d^3-952\,d^2+135\,d-2)}{(52\,d-5) (712\,d^2-102\,d+1)}<0
$$
for all $d\ge1$.  It follows that the $(3,3,6)$ entry of Table \ref{reduction B3} is $*$ if $d\le10$ and D if $d\ge11$.  The other 483 cases are analyzed similarly.
\end{proof}

Requiring that Banker make the optimal move in each of the cases that do not depend on Player's strategy, we have reduced the game, under Model B3 (resp., B2), to a $2^5\times 2^{n_d}$ (resp., $2\times 2^{n_d}$) matrix game, where $18\le n_d\le 23$.   

We have similar results for Model B1.  Again, $n_d:=|T_*|$.

\begin{theorem}\label{thm reduce B1}
\emph{(a)} Under Model B1 with the number of decks being a positive integer $d$, Lemma \ref{Lemma1} applies.  The sets $T_0$, $T_1$, and $T_*$ of the lemma can be inferred from Table \ref{reduction B1}, with entries \emph{S}, \emph{D}, and $*$ located at elements of $T_0$, $T_1$, and $T_*$, respectively.  In particular, $n_1=4$, $n_2=3$, and, $n_d=4$ for all $d\ge3$.

\emph{(b)} Lemma \ref{Lemma1} applies under Model A1, with the results those under Model B1 with $d\ge4$. 
\end{theorem}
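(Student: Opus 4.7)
My plan mirrors the proof of Theorem \ref{thm reduce B3}. Under Model B1, Player sees only his own two-card total and therefore has only two pure strategies, namely $S^\circ_0=\{0,1,2,3,4\}$ (stand on 5) and $S^\circ_1=\{0,1,2,3,4,5\}$ (draw on 5). Banker also sees only his own total together with Player's third card, so his pure strategies are indexed by subsets $T^\circ\subset\{0,\ldots,7\}\times\{0,\ldots,9,\varnothing\}$, giving $2^{88}$ choices. Lemma \ref{Lemma1} is to be applied with $m=2$, $n=88$, $p_i(0)=\P(X\in\{8,9\}\text{ or }Y\in\{8,9\})$, and $a_i(0)=0$; the remaining ingredients $p_{S^\circ_i}(j,k)$ and $a_{S^\circ_i,l}(j,k)$ are assembled from the Model B1 formulas at the end of Section \ref{payoff matrix}.

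For each of the 88 pairs $(j,k)$ with $j\in\{0,\ldots,7\}$ and $k\in\{0,\ldots,9,\varnothing\}$, I would form the two differences
$$
b_i(j,k):=a_{S^\circ_i,1}(j,k)-a_{S^\circ_i,0}(j,k),\qquad i\in\{0,1\},
$$
as explicit rational functions of $d$. The pair $(j,k)$ belongs to $T_1$ (D) when both differences are negative, to $T_0$ (S) when both are positive, and to $T_*$ ($*$) otherwise. In contrast with the B3 case no ``intermediate strategies'' argument is required, because Player has only two pure strategies; the two signs literally determine the classification.

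For the great majority of the 88 entries, both signs are constant in $d\ge1$ and agree, contributing the stable D/S part of Table \ref{reduction B1}. Only a small number of entries are exceptional, in the sense that at least one of the numerators of $b_0(j,k)$ and $b_1(j,k)$ is a polynomial in $d$ changing sign at some positive integer threshold, or else the two have opposite signs. For each such case I would display the explicit rational function and locate its positive integer roots, exactly as was done for $b_{31}(3,3,6)$ in the proof of Theorem \ref{thm reduce B3}. Tallying the $*$ entries by ranges of $d$ then yields the stated counts $n_1=4$, $n_2=3$, and $n_d=4$ for $d\ge3$.

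Part (b) follows by taking the $d\to\infty$ limit of each $b_i(j,k)$, using (\ref{case 1, d infinite}) and (\ref{case 3, d infinite}) in place of their finite-$d$ analogues. Each limit is a finite number whose sign agrees with that of $b_i(j,k)$ for all sufficiently large $d$; direct inspection of the finitely many exceptional entries shows that every sign transition happens at $d\le 3$, so the Model A1 classification coincides with the Model B1 classification for all $d\ge4$. The only genuine obstacle is the bookkeeping for the 88 pairs, but no individual sign determination is deeper than the example already displayed in the proof of Theorem \ref{thm reduce B3}.
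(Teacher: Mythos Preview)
Your proposal is correct and follows essentially the same route as the paper's own proof: define $b_u(j,k)=a_{u,1}(j,k)-a_{u,0}(j,k)$ for the two Player pure strategies, classify each of the 88 pairs $(j,k)$ by the joint signs, and handle the handful of $d$-dependent entries by displaying the explicit rational functions in $d$ and locating their positive-integer sign changes (the paper uses $(5,4)$ as its worked example). Your treatment of part~(b) via the $d\to\infty$ limit and the observation that all sign transitions occur at $d\le3$ is likewise in agreement with the paper.
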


\begin{table}[htb]
\caption{\label{reduction B1}Banker's optimal move (preliminary version) under Model B1 with $d=6$, indicated by D (draw) or S (stand), except in the $n_6=4$ cases indicated by $*$ in which it depends on Player's strategy.  Adjustments to the table for other positive integers $d$ are specified by footnotes.  Under Model A1, results are the same as those under Model B1 with $d\ge4$.\medskip}
\catcode`@=\active\def@{\phantom{0}}
\catcode`#=\active\def#{\phantom{$^0$}}
\begin{center}
\begin{tabular}{ccccccccccccc}
\hline
\noalign{\smallskip}
Banker's &\multicolumn{11}{c}{Player's third card ($\varnothing$ if Player stands)}\\
two-card &&&&&&&&&&\\
total & 0# & 1# & 2# & 3# & 4# & 5# & 6# & 7# & 8# & 9# & $\varnothing$ \\
\noalign{\smallskip} \hline
\noalign{\smallskip}
$0,1,2$& \cellcolor[gray]{0.85}D# & \cellcolor[gray]{0.85}D# & \cellcolor[gray]{0.85}D# & \cellcolor[gray]{0.85}D# & \cellcolor[gray]{0.85}D# & \cellcolor[gray]{0.85}D# & \cellcolor[gray]{0.85}D# & \cellcolor[gray]{0.85}D# & \cellcolor[gray]{0.85}D# & \cellcolor[gray]{0.85}D# & \cellcolor[gray]{0.85}D\\
3&       \cellcolor[gray]{0.85}D# & \cellcolor[gray]{0.85}D# & \cellcolor[gray]{0.85}D# & \cellcolor[gray]{0.85}D# & \cellcolor[gray]{0.85}D# & \cellcolor[gray]{0.85}D# & \cellcolor[gray]{0.85}D# & \cellcolor[gray]{0.85}D# & S$^3$ & $*$# & \cellcolor[gray]{0.85}D\\
4&       S# & $*^3$ & \cellcolor[gray]{0.85}D$^1$ & \cellcolor[gray]{0.85}D# & \cellcolor[gray]{0.85}D# & \cellcolor[gray]{0.85}D# & \cellcolor[gray]{0.85}D# & \cellcolor[gray]{0.85}D# & S# & S# & \cellcolor[gray]{0.85}D\\
5&       S# & S# & S# & S# & $*^2$ & \cellcolor[gray]{0.85}D# & \cellcolor[gray]{0.85}D# & \cellcolor[gray]{0.85}D# & S# & S# & \cellcolor[gray]{0.85}D\\
6&       S# & S# & S# & S# & S# & S# & \cellcolor[gray]{0.85}D# & \cellcolor[gray]{0.85}D# & S# & S# & $*$\\
7&       S# & S# & S# & S# & S# & S# & S# & S# & S# & S# & S\\
\noalign{\smallskip}
\hline
\noalign{\smallskip}
\multicolumn{12}{l}{$^1$Replace D by $*$ if $d=1$.}\\
\multicolumn{12}{l}{$^2$Replace $*$ by S if $d\le2$.}\\
\multicolumn{12}{l}{$^3$Replace S by $*$ and $*$ by S if $d\le3$.}\\
\end{tabular}
\end{center}
\end{table}

\begin{proof}
Let
$$
b_u(j,k):=a_{u,1}(j,k)-a_{u,0}(j,k),
$$
where $u=0$ corresponds to $S^\circ=\{0,1,2,3,4\}$ and $u=1$ corresponds to $S^\circ=\{0,1,2,3,4,5\}$.
If $b_u(j,k)<0$ for $u=0,1$, then the $(j,k)$ entry in Table \ref{reduction B1} is D; in 54 of the 88 entries, this property holds for every $d\ge1$.  If $b_u(j,k)>0$ for $u=0,1$, then the $(j,k)$ entry in Table \ref{reduction B1} is S; in 28 of the 88 entries, this property holds for every $d\ge1$.  If $b_0(j,k)>0>b_1(j,k)$ or $b_0(j,k)<0<b_1(j,k)$, then the $(j,k)$ entry in Table \ref{reduction B3} is $*$;  in two of the 88 entries, namely $(3,9)$ and $(6,\varnothing)$, this property holds for every $d\ge1$.  This accounts for all but four entries in Table \ref{reduction B1}, those marked with footnotes, in which the sign of $b_u(j,k)$ depends on $d$ for $u=0$ or $u=1$.   

For example, one of the four is $(5,4)$.  Indeed,
$$
b_1(5,4)=-\frac{15\,360\,d^4-45\,184\,d^3+9040\,d^2-588\,d+13}{(52\,d-5) (26\,880\,d^3-4680\,d^2+242\,d-3)}<0\text{ (resp., }>0)
$$
for $d\ge3$ (resp., $d\le2$), and
$$
b_0(5,4)=\frac{1024\,d^4+37\,248\,d^3-7792\,d^2+492\,d-7}{(52\,d-5)(22\,784\,d^3-3976\,d^2+194\,d-1)}>0
$$
for all $d\ge1$.  Therefore, the entry in the $(5,4)$ position of Table \ref{reduction B1} is $*$ if $d\ge3$ and S if $d\le2$.  The other 87 cases are analyzed similarly.
\end{proof}

\section{Solutions under Models A1, A2, and A3}\label{A1}

We recall Kemeny and Snell's (1957) solution of the $2\times 2^{88}$ matrix game that assumes Model A1.  (See Deloche and Oguer 2007 for an alternative approach based on the extensive, rather than the strategic, form of the game.)  Implicitly using Lemma~\ref{Lemma1}, they reduced the number of Banker pure strategies from $2^{88}$ to just $2^4$ (see Theorem \ref{thm reduce B1}).  The $2\times 2$ kernel of the game was determined to be
\begin{equation}\label{kernel A1}
\bordermatrix{&\text{B: S on }6,\varnothing & \text{B: D on }6,\varnothing\cr
\text{P: S on }5 &-4564 & -2692\cr
\text{P: D on }5 &-3705 & -4121\cr}2^4/(13)^6,
\end{equation}
implying that the following Player and Banker strategies are uniquely optimal.  Player draws on a two-card total of 5 with probability 
\begin{equation}\label{p A1}
p=\frac{9}{11}\approx0.818182,
\end{equation}
and Banker draws on a two-card total of $6$, when Player stands, with probability
\begin{equation}\label{q A1}
q=\frac{859}{2288}\approx0.375437.
\end{equation}
The value of the game (to Player) is 
\begin{equation}\label{v A1}
v=-\frac{679\,568}{53\,094\,899}\approx-0.0127991.
\end{equation}
The fully specified optimal strategy for Banker is given in Table \ref{banker B1} with $d\ge4$ and $q$ as in (\ref{q A1}).

Let us extend this analysis from Model A1 to Model A3.  Again we have a $2^5\times2^{484}$ matrix game, and the payoff matrix can be evaluated as in Section \ref{reduce}, using (\ref{case 1, d infinite}), (\ref{case 3, d infinite}), and (\ref{with-repl1})--(\ref{with-repl4}) in place of (\ref{case 1, d finite}), (\ref{case 3, d finite}), and (\ref{wo-repl1})--(\ref{wo-repl4}).  We can apply Lemma~\ref{Lemma1} and reduce the game to a $2^5\times 2^{22}$ matrix game.  We obtain the special case of Table~\ref{reduction B3} in which $d\ge11$.

When we evaluate this $2^5\times 2^{22}$ payoff matrix, we find that a number of rows are identical.  When several rows are identical, we eliminate duplicates.  When we make this reduction and rearrange the remaining rows in a more natural order, we are left with 9 rows, labeled by 0--8, which have a special structure.  Specifically, row $i$ corresponds to Player's mixed strategy (under Model A1) of drawing on a two-card total of 5 with probability $i/8$.  The reason that multiples of 1/8 appear is that, given that Player has a two-card total of 5, he has $(0,5)$, $(1,4)$, $(2,3)$, $(6,9)$, or $(7,8)$ with probabilities  $4/8$, $1/8$, $1/8$, $1/8$, and  $1/8$, respectively.

There are also a number of identical columns.  When we apply a similar reduction and rearrangement to the columns, we are left with $9^2(17)^2=23\,409$ columns, labeled by
$$
\{0,1,\ldots,8\}\times\{0,1,\ldots,16\}\times\{0,1,\ldots,8\}\times\{0,1,\ldots,16\},
$$
which have a similar structure.  Specifically, column $(j_1,j_2,j_3,j_4)$ corresponds to Banker's mixed strategy (under Model A1) of drawing on a two-card total of 3, when Player's third card is 9, with probability $j_1/8$; of drawing on a two-card total of 4, when Player's third card is 1, with probability $j_2/16$; of drawing on a two-card total of 5, when Player's third card is 4, with probability $j_3/8$; and of drawing on a two-card total of 6, when Player stands, with probability $j_4/16$.  The reason that multiples of 1/8 or 1/16 appear is that, given that Banker has a two-card total of 3, he has $(0,3)$, $(1,2)$, $(4,9)$, $(5,8)$, or $(6,7)$ with probabilities $4/8$, $1/8$, $1/8$, $1/8$, and $1/8$, respectively; given that Banker has a two-card total of 4, he has $(0,4)$, $(1,3)$, $(2,2)$, $(5,9)$, $(6,8)$, or $(7,7)$ with probabilities $8/16$, $2/16$, $1/16$, $2/16$, $2/16$, and $1/16$, respectively; given that Banker has a two-card total of 5, he has $(0,5)$, $(1,4)$, $(2,3)$, $(6,9)$, or $(7,8)$ with probabilities $4/8$, $1/8$, $1/8$, $1/8$, and $1/8$, respectively; given that Banker has a two-card total of 6, he has $(0,6)$, $(1,5)$, $(2,4)$, $(3,3)$, $(7,9)$, or $(8,8)$ with probabilities $8/16$, $2/16$, $2/16$, $1/16$, $2/16$, and $1/16$, respectively.

Next, we observe that column $(j_1,j_2,j_3,j_4)$ is a mixture of the $2^4$ (Model A1) pure strategies of Banker that remain after application of Lemma \ref{Lemma1}.  By the results for Model A1, optimal strategies for Banker must satisfy $j_1=8$, $j_2=0$, and $j_3=8$.  This reduces the game to a $9\times17$ matrix game, whose columns we relabel as 0--16.  Specifically, column $j$ corresponds to Banker's mixed strategy (under Model A1) of drawing on a two-card total of 6, when Player stands, with probability $j/16$.

Finally, what is the solution of the $9\times 17$ game?  We have seen that rows 1--7 (resp., columns 1--15) are mixtures of rows 0 and 8 (resp., columns 0 and 16).  In particular, rows 1--7 and columns 1--15 are dominated, but not strictly dominated.  Eliminating these rows and columns results in a $2\times2$ matrix game, namely the kernel (\ref{kernel A1}).  But eliminating dominated, but not strictly dominated, rows and columns may result in a loss of solutions, and it does so in this case.  Indeed, there are many solutions.  For Player, given two pure strategies $i,i'\in\{0,1,2,\ldots,8\}$ with
\begin{equation}\label{ineq3}
\frac{i}{8}<\frac{9}{11}<\frac{i'}{8},
\end{equation}
there is a unique $p\in(0,1)$ such that 
$$
(1-p)\bigg(\frac{i}{8}\bigg)+p\bigg(\frac{i'}{8}\bigg)=\frac{9}{11},
$$
and the $(1-p,p)$-mixture of pure strategies $i$ and $i'$ is optimal for Player.
There are 7 choices of $i$ and 2 choices of $i'$ that satisfy (\ref{ineq3}), hence 14 pairs $(i,i')$ that meet this condition.  

For Banker, given two pure strategies $j,j'\in\{0,1,2,\ldots,16\}$ with
\begin{equation}\label{ineq4}
\frac{j}{16}<\frac{859}{2288}<\frac{j'}{16},
\end{equation}
there is a unique $q\in(0,1)$ such that 
$$
(1-q)\bigg(\frac{j}{16}\bigg)+q\bigg(\frac{j'}{16}\bigg)=\frac{859}{2288},
$$
and the $(1-q,q)$-mixture of pure strategies $j$ and $j'$ is optimal for Banker.  There are 7 choices of $j$ and 10 choices of $j'$ that satisfy (\ref{ineq4}), hence 70 pairs $(j,j')$ that meet this condition.

Each such pair $(i,i')$ can be combined with each such pair $(j,j')$, so there are $14\times70=980$ pairs of optimal strategies of this form.  These are the extreme points of the convex set of equilibria.  All 980 of them appear when the game solver at \url{http://banach.lse.ac.uk/} is applied.

Let us single out one of them.  Take $i=6$ and $i'=7$, getting $p=6/11$, and take $j=1$ and $j'=9$, getting $q=179/286$.  This does not uniquely determine a pair of optimal strategies because of the duplicate rows and columns that were eliminated, but one pair of optimal mixed strategies to which this corresponds is shown in Table \ref{infdeck}.  
\begin{table}[htb]
\caption{\label{infdeck}A pair of optimal mixed strategies in Model A3.  For Banker's fully specified optimal strategy, see Table \ref{banker B3} with $d\ge10$.\medskip}
\catcode`@=\active\def@{\phantom{0}}
\begin{center}
\begin{tabular}{cc}\hline
\noalign{\smallskip}
\noalign{\smallskip}
\multicolumn{2}{l}{Player's two-card total is 5}\\
\noalign{\smallskip}
\hline
\noalign{\smallskip}
$(0,5),(6,9),(7,8)$ &  D   \\
$(1,4)$ &  $(\text{S},\text{D})$ with $(5/11,6/11)$   \\
$(2,3)$ &  S   \\
\noalign{\smallskip}
\hline
\noalign{\smallskip}
\noalign{\smallskip}
\multicolumn{2}{l}{Banker's two-card total is 6 and Player stands}\\
\noalign{\smallskip}
\hline
\noalign{\smallskip}
$(0,6)$ & $(\text{S},\text{D})$ with $(107/286,179/286)$\\
$(1,5),(2,4),(3,3),(7,9)$ &  S \\
$(8,8)$ &   D\\
\noalign{\smallskip}
\hline
\end{tabular}
\end{center}
\end{table}
As we will see, this pair of optimal mixed strategies is the limiting pair of optimal mixed strategies under Model B3 as $d\to\infty$.

Foster (1964) remarked, ``It is an interesting fact that this [optimal Player mixed] strategy is often attained approximately in practice by standing on the pair $2,3$ and calling [i.e., drawing] on any other combination adding to 5; this gives approximately the right frequency of calling [i.e., drawing].''   Actually, it gives a drawing probability of 7/8, not the required 9/11. But, as Table \ref{infdeck} suggests, Player should stand also on $(1,4)$ with probability 5/11.  Then the probability of Player drawing on a two-card total of 5 is
$$
\frac{3}{4}+\frac{1}{8}\,\frac{6}{11}=\frac{9}{11}.
$$

A similar analysis applies under Model A2.  Lemma \ref{Lemma1} reduces the $2\times 2^{484}$ matrix game to $2\times 2^{22}$.  Eliminating duplicate columns reduces the game to $2\times 9^2(17)^2$, and finally using the optimal solution under Model A1, we are left with a $2\times 17$ matrix game.  This is the game identified by Downton and Holder (1972).  As above, there are 70 extremal solutions, and they all appear when the game solver at \url{http://banach.lse.ac.uk/} is applied.

Again, we single out one of them.  Player draws on a two-card total of 5 with probability 9/11, and Banker follows Table \ref{infdeck}.  As we will see, this pair of optimal mixed strategies is the limiting pair of optimal mixed strategies under Model B2 as $d\to\infty$.

\section{Solution under Model B1}\label{B1}

Recall that Table \ref{reduction B1} applies under Model B1.  See Theorem \ref{thm reduce B1}.

In the case $d=6$, the $2\times2^4$ matrix game has kernel given by 
\begin{equation*}\label{kernel B1}
\bordermatrix{&\text{B: S on }6,\varnothing & \text{B: D on }6,\varnothing\cr
\text{P: S on 5 }&-23\,256\,431\,632& -13\,884\,629\,124\cr
\text{P: D on 5 }&-18\,880\,657\,128& -21\,061\,456\,188\cr}/1\,525\,814\,595\,305,
\end{equation*}
implying that the following Player and Banker strategies are uniquely optimal.  Player draws on a two-card total of 5 with probability 
\begin{equation}\label{p6 B1}
p_6=\frac{7\,631\,761}{9\,407\,656}\approx0.811229,
\end{equation}
and Banker draws on a two-card total of 6, when Player stands, with probability
\begin{equation}\label{q6 B1}
q_6=\frac{546\,971\,813}{1\,444\,075\,196}\approx0.378770.
\end{equation}
The value of the game (to Player) is 
\begin{equation}\label{v6 B1}
v_6=-\frac{23\,174\,205\,422\,119\,131}{1\,794\,292\,354\,051\,081\,885}\approx-0.0129155.
\end{equation}
The fully specified optimal strategy for Banker is given in Table \ref{banker B1}.  Comparing the solution under Model A1 with that under Model B1 reveals the effect of the ``with replacement'' assumption.  The solutions are identical except for the three parameters [(\ref{p A1})--(\ref{v A1}) vs.\ (\ref{p6 B1})--(\ref{v6 B1})].  For each parameter, the relative error is less than one percent.

\begin{table}[htb]
\caption{\label{banker B1}Banker's optimal move (final version) under Model B1 with $d$ being a positive integer, indicated by D (draw) or S (stand), or $(\text{S},\text{D})$ (stand with probability $1-q$, draw with probability $q$).  Here $q$ is as in (\ref{q1-3 B1})--(\ref{q4+ B1}).  Under Model A1, results are the same as those under Model B1 with $d\ge4$, except that $q$ is as in (\ref{q A1}).\medskip}
\catcode`@=\active\def@{\phantom{0}}
\catcode`#=\active\def#{\phantom{$^0$}}
\begin{center}
\begin{tabular}{ccccccccccccc}
\hline
\noalign{\smallskip}
Banker's &\multicolumn{11}{c}{Player's third card ($\varnothing$ if Player stands)}\\
two-card &&&&&&&&&&\\
total & 0# & 1# & 2# & 3# & 4# & 5# & 6# & 7# & 8# & 9# & $\varnothing$ \\
\noalign{\smallskip} \hline
\noalign{\smallskip}
$0,1,2$& \cellcolor[gray]{0.85}D# & \cellcolor[gray]{0.85}D# & \cellcolor[gray]{0.85}D# & \cellcolor[gray]{0.85}D# & \cellcolor[gray]{0.85}D# & \cellcolor[gray]{0.85}D# & \cellcolor[gray]{0.85}D# & \cellcolor[gray]{0.85}D# & \cellcolor[gray]{0.85}D# & \cellcolor[gray]{0.85}D# & \cellcolor[gray]{0.85}D\\
3&       \cellcolor[gray]{0.85}D# & \cellcolor[gray]{0.85}D# & \cellcolor[gray]{0.85}D# & \cellcolor[gray]{0.85}D# & \cellcolor[gray]{0.85}D# & \cellcolor[gray]{0.85}D# & \cellcolor[gray]{0.85}D# & \cellcolor[gray]{0.85}D# & S# & \cellcolor[gray]{0.85}D# & \cellcolor[gray]{0.85}D\\
4&       S# & S# & \cellcolor[gray]{0.85}D# & \cellcolor[gray]{0.85}D# & \cellcolor[gray]{0.85}D# & \cellcolor[gray]{0.85}D# & \cellcolor[gray]{0.85}D# & \cellcolor[gray]{0.85}D# & S# & S# & \cellcolor[gray]{0.85}D\\
5&       S# & S# & S# & S# & $*$# & \cellcolor[gray]{0.85}D# & \cellcolor[gray]{0.85}D# & \cellcolor[gray]{0.85}D# & S# & S# & \cellcolor[gray]{0.85}D\\
6&       S# & S# & S# & S# & S# & S# & \cellcolor[gray]{0.85}D# & \cellcolor[gray]{0.85}D# & S# & S# & \!\!\!\!\!$(\text{S},\text{D})$\!\!\!\!\\
7&       S# & S# & S# & S# & S# & S# & S# & S# & S# & S# & S\\
\noalign{\smallskip}
\hline
\noalign{\smallskip}
\multicolumn{12}{l}{$^*$ D if $d\ge4$, S if $d\le3$.}\\
\end{tabular}
\end{center}
\end{table}

This analysis extends to every positive integer $d$.

\begin{theorem}\label{thm opt B1}
Under Model B1 with $d$ being a positive integer, the following Player and Banker strategies are uniquely optimal.  Player draws on a two-card total of 5 with probability
\begin{equation}\label{pd B1}
p_d=\frac{(36\,864\,d^3-9312\,d^2+732\,d-23)}{8(5632\,d^3-1138\,d^2+69\,d-1)},\quad d\ge1,
\end{equation}
and Banker draws on on a two-card total of 6, when Player stands, with probability 
\begin{equation}\label{q1-3 B1}
q_d=\frac{224\,000\,d^4-55\,712\,d^3+2936\,d^2+163\,d-14}{2(52\,d-5)(5632\,d^3-1138\,d^2+69\,d-1)},\quad 1\le d\le3,
\end{equation}
or
\begin{equation}\label{q4+ B1}
q_d=\frac{439\,808\,d^4-107\,456\,d^3+5248\,d^2+374\,d-31}{4(52\,d-5)(5632\,d^3-1138\,d^2+69\,d-1)},\quad d\ge4.
\end{equation}
The value of the game (to Player) is
\begin{eqnarray*}
v_d&=&-32\,d^2(44\,396\,707\,840\,d^7-18\,908\,426\,240\,d^6+3\,279\,293\,696\,d^5\\
&&\qquad\quad{}-294\,129\,728\,d^4+14\,418\,160\,d^3-407\,352\,d^2+9543\,d\\
&&\qquad\quad{}-220)/[(5632\,d^3-1138\,d^2+69\,d-1)(52\,d)_6],\quad 1\le d\le3,
\end{eqnarray*}
or
\begin{eqnarray}\label{v4+ B1}
v_d&=&-16\,d^2(89\,072\,336\,896\,d^7-38\,873\,874\,432\,d^6+6\,969\,345\,536\,d^5\nonumber\\
&&\qquad\quad{}-655\,761\,920\,d^4+34\,638\,784\,d^3-1\,090\,952\,d^2+26\,286\,d\nonumber\\
&&\qquad\quad{}-537)/[(5632\,d^3-1138\,d^2+69\,d-1)(52\,d)_6],\quad d\ge4.\qquad
\end{eqnarray}
The fully specified optimal strategy for Banker is given in Table \ref{banker B1}.  
\end{theorem}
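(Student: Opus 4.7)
The plan is to apply Lemma \ref{Lemma2} (Foster's algorithm) to the $2 \times 2^{n_d}$ matrix game produced by Theorem \ref{thm reduce B1}. Since $n_d \le 4$, the remaining index set $T_* = T_{01} \cup T_{10}$ is small enough that every quantity can be kept in closed form as a rational function of $d$. I would first read $T_*$ off Table \ref{reduction B1} for each $d$: $T_* = \{(3,9),(4,1),(5,4),(6,\varnothing)\}$ for $d \ge 4$, and with the footnoted modifications the corresponding sets for $d = 1, 2, 3$. Then, using the Model B1 reduction at the end of Section \ref{payoff matrix} together with (\ref{case 1, d finite}), (\ref{case 3, d finite}), (\ref{wo-repl2}) and (\ref{wo-repl4}), I would compute the $2\times2$ blocks $(e_{i,j}(l))$ for each $l \in T_*$, as well as the scalars $e_0, e_1$ of Lemma \ref{Lemma2}, all symbolically in $d$.

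Next, for each $l \in T_*$ I would form the breakpoint $p(l)$ from Lemma \ref{Lemma2}, evaluate $V(p(l))$ via (\ref{lower-env}), and compare with $V(0)$ and $V(1)$. The claim is that the maximum of $V$ on $[0,1]$ is attained at the unique index $l^* = (6,\varnothing)$, yielding $p^* = p(l^*) = p_d$ as in (\ref{pd B1}). Having identified $l^*$, the two columns of the $2\times2$ kernel are $T(p^*)$ and $T(p^*) \cup \{l^*\}$, both read off from (\ref{T(p)}); solving the kernel gives $q_d$ and $v_d$. The difference in $T(p^*)$ for $d \le 3$ versus $d \ge 4$, driven by whether $(3,8)$ or $(4,1)$ lies in $T_*$ (and for $d = 1$ also whether $(4,2)$ does), accounts for the split in the formulas (\ref{q1-3 B1})--(\ref{v4+ B1}). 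The fully specified Banker strategy in Table \ref{banker B1} then follows by combining these two kernel columns with the forced moves in $T_0$ and $T_1$ taken from Table \ref{reduction B1}.

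The main obstacle is the symbolic case analysis required for uniqueness. By the last paragraph of Lemma \ref{Lemma2}, uniqueness of the Player optimum requires that the maximum of $V$ on $[0,1]$ be attained at a unique point, and uniqueness of the kernel requires that $p^*$ coincide with $p(l)$ for a unique $l \in T_*$. Both properties must be established by verifying strict inequalities among the rational functions $V(0)$, $V(1)$, and $V(p(l))$ for $l \in T_*$, uniformly in the positive integer $d$. Each such inequality reduces to confirming the sign of a single-variable polynomial in $d$ on the positive integers, and after factoring over $\mathbb{Z}$ the relevant sign patterns are elementary. A similar case-by-case sign check validates the boundary $d \le 3$ vs.\ $d \ge 4$ in the formulas for $q_d$ and $v_d$, and (for $d = 1$) the additional adjustment caused by $(4,2)$ entering $T_*$. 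Once these symbolic inequalities are verified, the formulas (\ref{pd B1})--(\ref{v4+ B1}) and Table \ref{banker B1} follow directly from the arithmetic in Lemma \ref{Lemma2}.
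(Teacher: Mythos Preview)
Your approach is essentially the paper's, though the paper organises the argument a bit more directly: rather than sweeping through the breakpoints $p(l)$ and maximising the envelope $V$, it writes down the $2\times2$ kernel explicitly (for $d\ge4$, the two Banker columns that differ only at $(6,\varnothing)$), reads off $p_d,q_d,v_d$, and then verifies optimality by checking the sixteen componentwise inequalities $(1-p_d,p_d)\bm A\ge(v_d,\ldots,v_d)$ as polynomial sign conditions in $d$. The $d\le3$/$d\ge4$ split is detected when the eighth and ninth of these components change sign.

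One point in your write-up is misdiagnosed. The split in the formulas for $q_d$ and $v_d$ is \emph{not} caused by the swap between $(3,8)$ and $(4,1)$ in $T_*$, nor by $(4,2)$ entering $T_*$ at $d=1$: Banker's optimal decision on each of these is the same whether the move is forced (in $T_0\cup T_1$) or chosen (in $T_*$), so they do not move the kernel columns. What actually changes the kernel between $d\le3$ and $d\ge4$ is Banker's best-response decision at $(5,4)$, which flips from stand to draw; this is precisely the sign change the paper exhibits in its eighth and ninth components, and it is what Table~\ref{banker B1} records in its footnote. Carrying out your computation of $T(p^*)$ via (\ref{T(p)}) would of course reveal this, so the overall plan still succeeds, but your explanation of the case boundary should be corrected.
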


\begin{proof}
For $d\ge4$, the kernel of the $2\times2^4$ payoff matrix is given by columns 10 and 11 (when columns are labeled from 0 to 15), namely
\begin{equation*}\label{kernel B1 d}
\bordermatrix{&\text{B: S on }6,\varnothing & \text{B: D on }6,\varnothing\cr
\text{P: S on 5}&4(1\,168\,384\,d^4-284\,720\,d^3 &\!\!\! 2\,756\,608\,d^4-470\,336\,d^3\cr
&{}+22\,320\,d^2-446\,d-11)&\!\!\!\;{}+4656\,d^2+3072\,d-159\cr
\noalign{\medskip}
\text{P: D on 5}&2(1\,896\,960\,d^4-461\,984\,d^3 &\!\!\! 4\,219\,904\,d^4-954\,112\,d^3\cr
&{}+39\,392\,d^2-1266\,d+9)&\!\!\!\;{}+68\,384\,d^2-852\,d-57}\frac{(-64\,d^2)}{(52\,d)_6}.
\end{equation*}
This implies (\ref{pd B1}) for $d\ge4$, (\ref{q4+ B1}), and (\ref{v4+ B1}).  

To confirm this, we must show that, with $\bm A$ denoting the $2\times2^4$ payoff matrix, we have
$$
(1-p_d,p_d)\bm A\ge(v_d,v_d,\ldots,v_d).
$$
This involves checking 16 inequalities (of which two are automatic).  For example, the eighth and ninth components of $(1-p_d,p_d)\bm A-(v_d,v_d,\ldots,v_d)$ equal 
\begin{eqnarray*}
&&16\,d^2 (278\,921\,216\,d^7 - 1\,057\,021\,952\,d^6 + 410\,758\,144\,d^5 - 67\,502\,464\,d^4\\
&&\qquad{} + 5\,802\,464\,d^3 - 276\,248\,d^2 + 7200\,d -97)\\
&&\qquad\qquad /[(5632\,d^3 - 1138\,d^2 + 69\,d - 1)(52\,d)_6],
\end{eqnarray*}
which is positive for $d\ge4$ and negative for $1\le d\le3$.  The 10th and 11th components are 0, of course.  The remaining components are positive for every $d\ge1$.

A similar analysis can then be carried out for $1\le d\le3$, in which case the kernel is given by columns 8 and 9 (of 0--15).
\end{proof}

We notice that the above kernel converges, as $d\to\infty$, to the kernel (\ref{kernel A1}).

\section{Solution under Model B2}\label{B2} 

The result that Table \ref{reduction B3} is identical under Models B2 and B3 is less surprising than it may first appear to be.  In Section \ref{A1} we saw that, under Model A3 with Player's pure strategies labeled from 0 to 31, pure strategy $u\in\{1,2,\ldots,30\}$ is a $(1-p,p)$ mixture of pure strategies 0 and 31, where
\begin{equation}\label{mixing}
p=\frac{4u_1+u_2+u_3+u_4+u_5}{8}\in\Big\{\frac{1}{8},\frac{2}{8},\ldots,\frac{7}{8}\Big\};
\end{equation}
here $u_1u_2u_3u_4u_5$ is the binary form of $u$, that is, $u_1,u_2,u_3,u_4,u_5\in\{0,1\}$ and $u=16u_1+8u_2+4u_3+2u_4+u_5$.
Consequently,
\begin{equation}\label{between}
a_{u,l}(j_1,j_2,k)\text{ lies between }a_{0,l}(j_1,j_2,k)\text{ and }a_{31,l}(j_1,j_2,k)
\end{equation}
for all $u\in\{1,2,\ldots,30\}$, $l=0,1$, $0\le j_1\le j_2\le9$ with $M(j_1+j_2)\le7$, and $k=0,1,\ldots,9,\varnothing$.  
Under Model B3, the conditional expectations in (\ref{between}) should not differ much from their Model A3 counterparts, especially for large $d$, hence we would expect that (\ref{between}) holds with few exceptions.  In fact, the \textit{only} exceptions occur when $l=1$, $M(j_1+j_2)=2$, $k=8$, and $d\le7$ because in these cases, $a_{0,l}(j_1,j_2,k)$ and $a_{31,l}(j_1,j_2,k)$ are very close.  When we consider the differences $b_u(j_1,j_2,k)$, there are even fewer exceptions, as noted previously.

One might ask why Model B2 was even considered by Downton and Lockwood (1975), inasmuch as its asymmetric assumption about the available information (beyond the asymmetry inherent in the rules) may seem contrived.  The answer, we believe, is that there already existed an algorithm, due to Foster (1964), for solving such games.  That algorithm was formalized in Lemma \ref{Lemma2} of Section \ref{prelims}.

Let us recall Downton and Lockwood's (1975) solution of the $2\times 2^{484}$ matrix game that assumes Model B2.  In the case $d=6$, the $2\times2$ kernel of the game is found to be
\begin{equation*}\label{kernel B2}
\bordermatrix{&\text{B: S on }(0,6),\varnothing & \text{B: D on }(0,6),\varnothing\cr
\text{P: S on 5 }&-22\,721\,165\,499& -18\,033\,241\,115\cr
\text{P: D on 5 }&-19\,018\,265\,931& -20\,151\,297\,323\cr}/1\,525\,814\,595\,305,
\end{equation*}
implying that the following Player and Banker strategies are uniquely optimal.  Player draws on a two-card total of 5 with probability 
\begin{equation*}
p_6=\frac{477\,191}{592\,524}\approx0.805353,
\end{equation*}
and Banker draws on $(0,6)$, when Player stands, with probability
\begin{equation*}
q_6=\frac{77\,143\,741}{121\,269\,912}\approx0.636133.
\end{equation*}
The value of the game (to Player) is 
\begin{equation}\label{v6 B2}
v_6=-\frac{974\,653\,793\,197\,999}{75\,340\,147\,272\,374\,985}\approx-0.0129367,
\end{equation}
which is less than (\ref{v6 B1}) because Banker has additional options while Player's options are unchanged.  The fully specified optimal strategy for Banker is given in Table \ref{banker B3}.

This analysis extends to every positive integer $d$.  We do not display the kernel, only the solution.  

\begin{theorem}\label{thm opt B2}
Under Model B2 with $d$ being a positive integer, the following Player and Banker strategies are uniquely optimal.
Player draws on a two-card total of 5 with probability
\begin{equation*}
p_d=\frac{(8\,d-1)(12\,d-1)(24\,d-1)}{2\,d(1408\,d^2-220\,d+9)},\quad d\ge1,
\end{equation*}
and Banker draws on $(0,6)$, when Player stands, with probability 
\begin{equation}\label{q1-3 B2}
q_1=\frac{290\,383}{450\,072},\quad q_2=\frac{2\,591\,845}{4\,119\,192},\quad q_3=\frac{9\,294\,089}{14\,521\,368},
\end{equation}
\begin{equation}\label{q4-7 B2}
q_d=\frac{368\,640\,d^4-68\,624\,d^3-2168\,d^2+981\,d-48}{8\,d(52\,d-5)(1408\,d^2-220\,d+9)},\quad 4\le d\le7,
\end{equation}
\begin{equation}\label{q8,9 B2}
q_d=\frac{367\,616\,d^4-67\,728\,d^3-2416\,d^2+1015\,d-51}{8\,d(52\,d-5)(1408\,d^2-220\,d+9)},\quad d=8,9,
\end{equation}
or 
\begin{equation}\label{q10+ B2}
q_d=\frac{366\,592\,d^4-67\,344\,d^3-2456\,d^2+1017\,d-51}{8\,d(52\,d-5)(1408\,d^2-220\,d+9)},\quad d\ge10.
\end{equation}
The value of the game (to Player) is 
\begin{equation*}
v_1=-\frac{22\,932\,137}{1\,666\,583\,100},\quad v_2=-\frac{8\,220\,886\,553}{620\,866\,384\,425},\quad v_3=-\frac{210\,084\,639\,838}{16\,053\,072\,820\,785},
\end{equation*}
\begin{eqnarray*}
v_d&=&-32\,d(11\,125\,325\,824\,d^7-4\,182\,669\,312\,d^6+615\,333\,888\,d^5\nonumber\\
&&\qquad\quad{}-43\,467\,904\,d^4+1\,329\,008\,d^3+5040\,d^2\nonumber\\
&&\qquad\quad{}-1551\,d+39)/[(1408\,d^2-220\,d+9)(52\,d)_6],\quad 4\le d\le7,\quad
\end{eqnarray*}
\begin{eqnarray*}
v_d&=&-32\,d(11\,129\,683\,968\,d^7-4\,218\,739\,712\,d^6+635\,681\,024\,d^5\nonumber\\
&&\qquad\quad{}-47\,725\,760\,d^4+1\,738\,944\,d^3-14\,344\,d^2\nonumber\\
&&\qquad\quad{}-1093\,d+33)/[(1408\,d^2-220\,d+9)(52\,d)_6],\quad d=8,9,\quad
\end{eqnarray*}
or
\begin{eqnarray}\label{v10+ B2}
v_d&=&-32\,d(11\,134\,042\,112\,d^7-4\,259\,389\,440\,d^6+648\,152\,320\,d^5\nonumber\\
&&\qquad\quad{}-49\,007\,232\,d^4+1\,788\,256\,d^3-14\,816\,d^2\nonumber\\
&&\qquad\quad{}-1089\,d+33)/[(1408\,d^2-220\,d+9)(52\,d)_6],\quad d\ge10.\qquad
\end{eqnarray}
The fully specified optimal strategy for Banker is given in Table \ref{banker B3}.  
\end{theorem}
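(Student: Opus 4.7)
The plan is to apply Foster's algorithm (Lemma~\ref{Lemma2}) to the reduced $2\times 2^{n_d}$ matrix game for Model B2 supplied by Theorem~\ref{thm reduce B3}(b), exactly mirroring the proof of Theorem~\ref{thm opt B1} but tracking the dependence on $d$. For each positive integer $d$, I would first assemble the entries $e_0(0), e_1(0)$ and the $2\times 2$ blocks $(e_{i,j}(l))$ from (\ref{a_{S,T}}) together with (\ref{case 1, d finite}), (\ref{case 3, d finite}), and (\ref{wo-repl1})--(\ref{wo-repl4}), restricted to Player's two Model B2 pure strategies (stand vs.\ draw on a two-card total of 5). For each $l\in T_{01}\cup T_{10}$ I would then compute the rational functions $p(l)=p(l;d)$ and $V(p(l);d)$ symbolically in \emph{Mathematica} and identify the $l^*$ (and hence $p^*=p(l^*)$) at which the piecewise-linear lower envelope $V(\cdot;d)$ is maximized.

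Because the maximizing index $l^*$ jumps as $d$ crosses certain thresholds, I expect four separate regimes: $d\in\{1,2,3\}$, $4\le d\le 7$, $d\in\{8,9\}$, and $d\ge 10$, matching the case split in (\ref{q1-3 B2})--(\ref{q10+ B2}). In each regime, the two kernel columns are the Banker pure strategies which complete the list of non-dominated choices and differ exactly by the move on $(0,6),\varnothing$ (consistent with Table~\ref{banker B3}). Solving the resulting $2\times 2$ kernel by the standard equalization formulas yields the claimed expressions for $p_d$, $q_d$, and $v_d$; the six sporadic values $p_1,p_2,p_3,q_1,q_2,q_3$ in (\ref{q1-3 B2}) and the three sporadic values $v_1,v_2,v_3$ simply record that in these small-$d$ cases a different kernel column is active.

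To confirm optimality and uniqueness, I would verify that $(1-p_d,p_d)\bm A\ge (v_d,v_d,\ldots,v_d)$ componentwise for each of the $2^{n_d}$ columns of the reduced payoff matrix $\bm A$, exactly as in the proof of Theorem~\ref{thm opt B1}. Each inequality reduces to the positivity of a rational function of $d$ whose denominator is the common positive factor $(1408\,d^2-220\,d+9)(52\,d)_6$ (or a close relative) and whose numerator is an explicit integer polynomial; its sign on the relevant range of $d$ can be certified either by factoring and inspecting its real roots or by a Sturm-sequence computation in infinite precision. Uniqueness follows by appeal to the final assertion of Lemma~\ref{Lemma2}: one checks that within each regime the maximizer $p^*$ on $[0,1]$ is attained at a unique $l^*\in T_{01}\cup T_{10}$, so the two kernel columns are uniquely determined.

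The main obstacle is the bookkeeping across the four $d$-regimes, compounded by the fact that $n_d$ itself varies with $d$ (Theorem~\ref{thm reduce B3}(a)) so the set of candidate columns changes between regimes; the transition values of $d$ (notably $d=3\to 4$, $d=7\to 8$, and $d=9\to 10$) correspond precisely to ties in $V(\cdot;d)$ at which two different $l$'s give the same value, and one must show that on either side of the tie the correct pair of kernel columns emerges and continues to yield the claimed $p_d,q_d,v_d$ without any hidden sign reversal among the $2^{n_d}-2$ verification inequalities.
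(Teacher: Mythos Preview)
Your plan---apply Lemma~\ref{Lemma2} to the reduced $2\times2^{n_d}$ game from Theorem~\ref{thm reduce B3}(b), locate the maximum of the lower envelope $V$, read off the $2\times2$ kernel, and then certify the saddle-point inequalities as rational functions of $d$---is exactly the paper's approach.

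One point deserves correction, however, because it misidentifies the mechanism behind the case split and could send you looking for the wrong thing. The regimes in (\ref{q1-3 B2})--(\ref{q10+ B2}) do \emph{not} arise because the maximizing index $l^*$ jumps, nor because of ties $V(p(l_1);d)=V(p(l_2);d)$. In fact $l^*=(0,6,\varnothing)$ throughout, and the two kernel columns always differ only in that single coordinate, as you correctly note. What changes across regimes is the \emph{content} of those two columns at positions $l\neq l^*$: Banker's best response $T(p^*)$ depends on the signs of $p(l)-p^*$ for $l\in T_{01}\cup T_{10}$, and these flip at certain $d$ (e.g., $(3,3,6)$ at $d=3\!\to\!4$, $(1,4,4)$ at $d=7\!\to\!8$, $(2,3,4)$ at $d=9\!\to\!10$; cf.\ Table~\ref{banker B3}). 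This is why $d=1,2,3$ are genuinely separate cases rather than a single regime. Relatedly, the paper verifies $d=1,\dots,19$ individually and only treats $d\ge20$ symbolically, because that is where a fixed ordering $p_{16}<p_{12}<\cdots<p_6$ of all 22 abscissae can be established once and for all; attempting a uniform symbolic argument from $d\ge10$ would require extra care since $n_d$ and the ordering are still settling down in that range.
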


\begin{remark}
This is a slightly stronger statement than that of Downton and Lockwood (1975).\footnote{They assumed $d\le8$ and rounded results to four decimal places, but they also allowed $d=\frac{1}{2}$, which we do not.  Their Table 2(b), which graphically represents the dependence on $d$ in Banker's optimal strategy, contains three ambiguities.  Specifically, for $(j_1,j_2,k)=(6,8,1)$ and $d=2$, for $(j_1,j_2,k)=(2,3,4)$ and $d=8$, and for $(j_1,j_2,k)=(6,9,4)$ and $d=2$, it is uncertain whether D or S was intended.  We have confirmed that D, S, and S, respectively, were intended in these cases.  Furthermore, their table seems to suggest that D applies when $(j_1,j_2,k)=(2,3,4)$ and $d=9$, which is incorrect.}  The reason for having different formulas for $q_d$ and $v_d$ in the six cases $d=1$, $d=2$, $d=3$, $4\le d\le7$, $d=8,9$, and $d\ge10$ is that Banker's two pure strategies that determine the kernel do not vary if $4\le d\le7$, if $d=8,9$, or if $d\ge10$.  This is a consequence of Table~\ref{banker B3}, which comes largely from Tables~2(a) and 2(b) of Downton and Lockwood (1975).
\end{remark}

\begin{proof}
We apply Lemma \ref{Lemma2} for $d=1,2,\ldots,19$.  Only $(3,3,6)$ belongs to $T_{10}$ (if $d\le10$).  For each choice of $d$, the program runtime is about 15 seconds.  For $d\ge20$, the 22 points $p(l)$ ($l\in T_{01}\cup T_{10}$), relabeled as $p_1,p_2,\ldots,p_{22}$, at which Player's expected payoff is evaluated satisfy $p_{16}<p_{12}<p_{15}<p_{13}<p_{14}<p_5<p_4<p_1<p_3<p_2<p_{22}<p_{17}<p_{21}<p_{18}<p_{19}<p_{20}<p_{11}<p_{10}<p_8<p_7<p_9<p_6$, so the algorithm applies with a variable $d$.  For $d\ge20$, $V(p)$ is maximized at $p_{17}$ (or $p(0,6,\varnothing)$).  At $p_{22}$ (or $p(8,8,\varnothing)$), for example, 
\begin{eqnarray*}
V(p_{22})&=&-32 d(2\,783\,510\,528\,d^7-1\,188\,571\,136\,d^6+203\,128\,704\,d^5\\
&&\qquad{}-16\,568\,896\,d^4+\,596\,408\,d^3-16\,158\,d^2\\
&&\qquad{}+1855\,d-93)/[(352\,d^2-71\,d+4)(52\,d)_6],
\end{eqnarray*}
and this is less than $V(p_{17})$ (see (\ref{v10+ B2})) for all $d\ge20$, though the difference tends to 0 as $d\to\infty$.
\end{proof}

\begin{table}[htb]
\caption{\label{banker B3}Banker's optimal move (final version) under Model B2 or B3 with $d$ being a positive integer, indicated by D (draw), S (stand), or $(\text{S},\text{D})$ (stand with probability $1-q$, draw with probability $q$).  In Model B2, $q$ is as in (\ref{q1-3 B2})--(\ref{q10+ B2}), and in Model B3, $q$ is as in (\ref{q1-3 B3})--(\ref{q9+ B3}).}
\catcode`@=\active\def@{\phantom{0}}
\catcode`#=\active\def#{\phantom{$^0$}}
\begin{center}
\begin{tabular}{ccccccccccccc}
\hline
\noalign{\smallskip}
Banker's &\multicolumn{11}{c}{Player's third card ($\varnothing$ if Player stands)}\\
two-card &&&&&&&&&&\\
total & 0# & 1# & 2# & 3# & 4# & 5# & 6# & 7# & 8# & 9# & $\varnothing$ \\
\noalign{\smallskip} \hline
\noalign{\smallskip}
$0,1,2$& \cellcolor[gray]{0.85}D# & \cellcolor[gray]{0.85}D# & \cellcolor[gray]{0.85}D# & \cellcolor[gray]{0.85}D# & \cellcolor[gray]{0.85}D# & \cellcolor[gray]{0.85}D# & \cellcolor[gray]{0.85}D# & \cellcolor[gray]{0.85}D# & \cellcolor[gray]{0.85}D# & \cellcolor[gray]{0.85}D# & \cellcolor[gray]{0.85}D\\
3&       \cellcolor[gray]{0.85}D# & \cellcolor[gray]{0.85}D# & \cellcolor[gray]{0.85}D# & \cellcolor[gray]{0.85}D# & \cellcolor[gray]{0.85}D# & \cellcolor[gray]{0.85}D# & \cellcolor[gray]{0.85}D# & \cellcolor[gray]{0.85}D# & S$^1$ & \cellcolor[gray]{0.85}D# & \cellcolor[gray]{0.85}D\\
4&       S# & S$^2$ & \cellcolor[gray]{0.85}D# & \cellcolor[gray]{0.85}D# & \cellcolor[gray]{0.85}D# & \cellcolor[gray]{0.85}D# & \cellcolor[gray]{0.85}D# & \cellcolor[gray]{0.85}D# & S# & S# & \cellcolor[gray]{0.85}D\\
5&       S# & S# & S# & S# & $*$# & \cellcolor[gray]{0.85}D# & \cellcolor[gray]{0.85}D# & \cellcolor[gray]{0.85}D# & S# & S# & \cellcolor[gray]{0.85}D\\
6&       S# & S# & S# & S# & S# & S# & \cellcolor[gray]{0.85}D$^3$ & \cellcolor[gray]{0.85}D# & S# & S# & $\dag$\\
7&       S# & S# & S# & S# & S# & S# & S# & S# & S# & S# & S\\
\noalign{\smallskip}
\hline
\end{tabular}
\begin{tabular}{cc}
\noalign{\smallskip}
\multicolumn{2}{l}{$^1$Banker's two-card total is 3 and Player's third card is 8}\\
\noalign{\smallskip}
$(0,3),(1,2),(5,8)$ &  S   \\
$(4,9)$ &  S if $d\ge2$, D if $d=1$   \\
$(6,7)$ & S if $d\ge1$ (Model B2)\\
 & S if $d\ge2$, D if $d=1$ (Model B3)\\
\noalign{\smallskip}
\hline
\noalign{\smallskip}
\multicolumn{2}{l}{$^2$Banker's two-card total is 4 and Player's third card is 1}\\
\noalign{\smallskip}
$(0,4),(1,3),(2,2),(5,9)$ &  S   \\
$(6,8),(7,7)$ &  S if $d\ge3$, D if $d\le2$   \\
\noalign{\smallskip}
\hline
\noalign{\smallskip}
\multicolumn{2}{l}{$^*$Banker's two-card total is 5 and Player's third card is 4}\\
\noalign{\smallskip}
$(0,5),(7,8)$ &  D if $d\ge2$, S if $d=1$   \\
$(1,4)$ &  S if $d\le7$, D if $d\ge8$  \\
$(2,3)$ &  S if $d\le9$, D if $d\ge10$ (Model B2)  \\
        &  S if $d\le8$, D if $d\ge9$ (Model B3)  \\
$(6,9)$ &  D if $d\ge3$, S if $d\le2$ (Model B2)\\
        &  D if $d\ge2$, S if $d=1$ (Model B3)\\
\noalign{\smallskip}
\hline
\noalign{\smallskip}
\multicolumn{2}{l}{$^3$Banker's two-card total is 6 and Player's third card is 6}\\
\noalign{\smallskip}
$(0,6),(1,5),(2,4),(7,9),(8,8)$ &  D   \\
$(3,3)$ &  D if $d\ge4$, S if $d\le3$   \\
\noalign{\smallskip}
\hline
\noalign{\smallskip}
\multicolumn{2}{l}{$^\dag$Banker's two-card total is 6 and Player stands} \\
\noalign{\smallskip}
$(0,6)$ & $(\text{S},\text{D})$ (Model B2) \\
 & $(\text{S},\text{D})$ if $d\ge2$, S if $d=1$ (Model B3) \\
$(1,5),(2,4),(3,3),(7,9)$ &    S \\
$(8,8)$ &   D (Model B2) \\
 & D if $d\ge2$, $(\text{S},\text{D})$ if $d=1$ (Model B3) \\
\noalign{\smallskip}
\hline
\end{tabular}
\end{center}
\end{table}
\afterpage{\clearpage}

\section{Solution under Model B3}\label{B3}

We have reduced the game to a $2^5\times2^{n_d}$ matrix game, where $18\le n_d\le23$.  The fact that Table \ref{reduction B3} is identical under Models B2 and B3 suggests the existence of a $2\times2$ kernel under Model B3 whose columns are the same as those of the $2\times2$ kernel under Model B2 as described in Section \ref{B2}.  Then the resulting $2^5\times2$ matrix game will of course have a $2\times2$ kernel, which is easy to find by graphical or other methods, and it will remain to  confirm that this kernel corresponds to a solution of the $2^5\times2^{n_d}$ matrix game.  

As we will see, this approach works for all positive integers $d$ except 1, 2, and 9.  These last three cases can be treated separately.

Let us begin with the case $d=6$.  Here the rows of the $32\times2$ payoff matrix are labeled from 0 to 31 as in Section \ref{reduce}, with the 5-bit binary form of the row number specifying the strategy (1 indicates D and 0 indicates S, in the five cases $(0,5)$, $(1,4)$, $(2,3)$, $(6,9)$, and $(7,8)$).  For example, row 19 (binary 10011) corresponds to drawing on $(0,5)$, standing on $(1,4)$ and $(2,3)$, and drawing on $(6,9)$ and $(7,8)$. 

We could label the two columns in a similar way but with binary strings of length 18 corresponding to the asterisks in Table \ref{reduction B3}, in the specific order $(0,3,9)$, $(1,2,9)$, $(4,9,9)$, $(5,8,9)$, $(6,7,9)$, $(2,2,1)$, $(6,8,1)$, $(7,7,1)$, $(0,5,4)$, $(6,9,4)$, $(7,8,4)$, $(3,3,6)$, $(0,6,\varnothing)$, $(1,5,\varnothing)$, $(2,4,\varnothing)$, $(3,3,\varnothing)$, $(7,9,\varnothing)$, $(8,8,\varnothing)$,  reading the string left to right.  In that case, the two columns would be labeled
\begin{equation}\label{binary}
111\,110\,001\,111\,000\,001\quad\text{and}\quad111\,110\,001\,111\,100\,001.
\end{equation}

The kernel is easily found to be given by rows 19 and 27, so it is equal to
\begin{equation*}
\bordermatrix{&\text{B: S on }(0,6),\varnothing & \text{B: D on }(0,6),\varnothing\cr
\text{P: S on }(1,4)&-19\,769\,569\,403 &-19\,425\,699\,931\cr 
\text{P: D on }(1,4)&-19\,391\,857\,983 &-19\,783\,609\,631\cr}/1\,525\,814\,595\,305,
\end{equation*}
implying that the following Player and Banker strategies are optimal.  Player draws on $(1,4)$ with probability 
\begin{equation}\label{p6 B3}
p_6=\frac{35\,003}{74\,880}\approx0.467455,
\end{equation}
and Banker draws on $(0,6)$, when Player stands, with probability
\begin{equation}\label{q6 B3}
q_6=\frac{18\,885\,571}{36\,781\,056}\approx0.513459.
\end{equation}
The value of the game (to Player) is 
\begin{equation}\label{v6 B3}
v_6=-\frac{73\,356\,216\,203\,119}{5\,712\,649\,844\,821\,920}\approx-0.0128410,
\end{equation}
which is greater than (\ref{v6 B2}) because Player has additional options while Banker's options are unchanged.
The fully specified optimal strategies for Player and Banker are given in Tables \ref{player} and \ref{banker B3}.

\begin{table}[tbh]
\caption{\label{player}Player's optimal move under Model B3, indicated by D (draw), S (stand), or $(\text{S},\text{D})$ (stand with probability $1-p$, draw with probability $p$).  Here $p$ is as in (\ref{pd B3}).
\medskip}
\catcode`@=\active\def@{\phantom{0}}
\begin{center}
\begin{tabular}{cc}\hline
\noalign{\smallskip}
Player's two-card hand & optimal move \\
\noalign{\smallskip} \hline
\noalign{\smallskip}
$(0,5),(6,9),(7,8)$& D \\
$(1,4)$& $(\text{S},\text{D})$\\
$(2,3)$&   S\\
\noalign{\smallskip}
\hline
\end{tabular}
\end{center}
\end{table}

This analysis extends to every positive integer $d$.  We do not display the kernel, only the solution.  

\begin{theorem}\label{thm opt B3}
Under Model B3 with the number of decks being a positive integer $d$, the following Player and Banker strategies are optimal.
Player draws on $(1,4)$ with probability
\begin{equation}\label{pd B3}
p_1=\frac{1}{19},\qquad  
p_d=\frac{(12\,d-1)(16\,d^2-14\,d+1)}{32\,d^2(11\,d-1)},\quad d\ge2,
\end{equation}
and Banker draws on
\begin{equation*}
\begin{cases}(8,8)&\text{if $d=1$,}\\(0,6)&\text{if $d\ge2$,}\end{cases}
\end{equation*}
when Player stands, with probability 
\begin{equation}\label{q1-3 B3}
q_1=\frac{4519}{10\,716},\quad q_2=\frac{17\,431}{64\,512},\quad q_3=\frac{4\,425\,647}{11\,132\,928},
\end{equation}
\begin{equation}\label{q4-7 B3}
q_d=\frac{92\,160\,d^4-120\,128\,d^3+26\,336\,d^2-2000\,d+47}{256\,d^2(11\,d-1)(52\,d-5)},\quad 4\le d\le7,
\end{equation}
\begin{equation}\label{q8 B3}
q_8=\frac{316\,815\,305}{585\,842\,688},
\end{equation}
or 
\begin{equation}\label{q9+ B3}
q_d=\frac{91\,648\,d^4-119\,488\,d^3+26\,032\,d^2-1932\,d+41}{256\,d^2(11\,d-1)(52\,d-5)},\quad d\ge9.
\end{equation}
The value of the game (to Player) is 
\begin{equation*}
v_1=-\frac{3\,439\,451}{25\,482\,800},\quad v_2=-\frac{49\,424\,010\,137}{3\,823\,801\,581\,600},\quad v_3=-\frac{31\,717\,439\,249}{2\,461\,444\,457\,472},
\end{equation*}
\begin{eqnarray*}
v_d&=&-2(1\,390\,665\,728\,d^7-491\,115\,520\,d^6+50\,698\,240\,d^5\nonumber\\
&&\qquad\quad{}+2\,428\,032\,d^4-990\,512\,d^3+89\,192\,d^2\nonumber\\
&&\qquad\quad{}-3462\,d+47)/[(11\,d-1)(52\,d)_6],\quad 4\le d\le7,\quad
\end{eqnarray*}
\begin{equation*}
v_8=-\frac{2\,789\,416\,947\,665\,657}{217\,430\,324\,984\,396\,160},
\end{equation*}
or
\begin{eqnarray}\label{vd B3}
v_d&=&-2(1\,391\,755\,264\,d^7-500\,535\,296\,d^6+54\,174\,464\,d^5\nonumber\\
&&\qquad\quad{}+1\,931\,136\,d^4-948\,816\,d^3+85\,792\,d^2\nonumber\\
&&\qquad\quad{}-3238\,d+41)/[(11\,d-1)(52\,d)_6],\quad d\ge9.\qquad
\end{eqnarray}
The fully specified optimal strategies for Player and Banker are given in Tables \ref{player} and \ref{banker B3}.
\end{theorem}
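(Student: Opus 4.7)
The plan is to exploit the parallel between Models B2 and B3 observed at the start of Section \ref{B3}: since Table \ref{reduction B3} is identical in both models, the two Banker pure strategies identified as kernel columns in the Model B2 proof of Theorem \ref{thm opt B2} are the natural candidates for the two Banker columns of a $2\times 2$ kernel under Model B3. Fixing those two columns reduces the $2^5\times 2^{n_d}$ payoff matrix to a $2^5\times 2$ submatrix, and it remains to locate a $2\times 2$ sub-kernel and verify that it solves the full game.

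First I would compute the $2^5\times 2$ submatrix for each positive integer $d$ using the formulas of Section \ref{payoff matrix} and identify its $2\times 2$ sub-kernel by graphical or algebraic means. The kernel rows turn out to be $u=19$ and $u=27$ in the labeling of Section \ref{reduce}, corresponding to Player drawing on $(0,5),(6,9),(7,8)$, standing on $(2,3)$, and respectively standing or drawing on $(1,4)$, matching Table \ref{player}. Solving this $2\times 2$ kernel algebraically yields the closed-form expressions (\ref{pd B3}), (\ref{q9+ B3}), (\ref{vd B3}) for the generic range, together with the piecewise formulas (\ref{q1-3 B3})--(\ref{q8 B3}) for smaller $d$; the case splits arise exactly as in Theorem \ref{thm opt B2} because the Banker columns themselves change with $d$ according to Table \ref{banker B3}.

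Verification in the full $2^5\times 2^{n_d}$ game has two directions. On Player's side, I would apply Lemma \ref{Lemma2} to the $2\times 2^{n_d}$ subgame obtained by restricting Player to rows $19$ and $27$, to show that Banker's best response against the mixed strategy $(1-p_d,p_d)$ is exactly the mixture $(1-q_d,q_d)$ of the two candidate columns with value $v_d$; this reduces to ordering the $n_d$ break-even probabilities $p(l)$ of Lemma \ref{Lemma2} against $p_d$, as already done for Model B2. On Banker's side, I would verify that for each of the 32 Player pure strategies $u$, the expected payoff against Banker's mixture $(1-q_d,q_d)$ of the two kernel columns is at most $v_d$; this is a finite list of rational-function inequalities in $d$, of which the two kernel rows $u=19$ and $u=27$ give automatic equalities, and the remaining 30 rows can be bounded by exploiting the near-linearity property (\ref{between}) to reduce attention to comparisons involving rows $0$ and $31$.

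The main obstacle will be the case analysis, which produces distinct regimes at $d\in\{1,2,3\}$, $4\le d\le 7$, $d=8$, and $d\ge 9$, plus the three exceptional cases $d=1,2,9$ flagged in the paragraph preceding the theorem. For $d=1$, in particular, Banker's kernel column pair involves $(8,8,\varnothing)$ rather than $(0,6,\varnothing)$ (compare footnote $\dag$ of Table \ref{banker B3}), so the $2^5\times 2$ reduction must be redone with the appropriate pair; similar substitutions are needed for $d=2$ and $d=9$. For each bounded $d$ I would carry out the computation and the two-sided verification explicitly, mirroring the style of the proof of Theorem \ref{thm opt B2} and of the confirmation step in the proof of Theorem \ref{thm opt B1}; for the generic regime the uniform formulas (\ref{pd B3}), (\ref{q9+ B3}), (\ref{vd B3}) hold once the break-even ordering stabilizes, reducing the verification to a finite collection of polynomial inequalities in $d$ that can be checked in closed form.
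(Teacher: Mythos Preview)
Your plan is essentially the paper's own proof: take the two Banker columns from the Model~B2 kernel as candidates, locate the $2\times2$ sub-kernel of the resulting $2^5\times2$ matrix at Player rows $19$ and $27$, solve it to obtain (\ref{pd B3})--(\ref{vd B3}), then verify optimality by applying Lemma~\ref{Lemma2} to the $2\times2^{n_d}$ subgame on rows $19,27$ for the Player side and by checking the $32$ row inequalities for the Banker side, with the exceptional cases $d=1,2,9$ handled by correcting the candidate columns (and for $d\ge10$ by ordering the six break-even probabilities $p(l)$ uniformly in $d$).

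One point to drop: the suggested shortcut of using (\ref{between}) to reduce the Banker-side check to rows $0$ and $31$ cannot work. Property (\ref{between}) says the row-$u$ payoff lies \emph{between} the row-$0$ and row-$31$ payoffs, so knowing rows $0$ and $31$ are both $\le v_d$ would force rows $19$ and $27$ (which equal $v_d$) to be strictly below $v_d$ unless rows $0$ and $31$ also hit $v_d$ exactly, which they do not. The paper accordingly verifies all $30$ remaining row inequalities directly (computer-assisted, in exact arithmetic), and you should plan on doing the same.
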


\begin{remark}
We suspect that the solution is unique but do not have a proof.  Notice what happens as $d\to\infty$:  $p_d\to6/11$, $q_d\to179/286$, and $v_d\to-679\,568/53\,094\,899$, so the optimal mixed strategies approach those of Table~\ref{infdeck}.
\end{remark}

\begin{proof}
Consider the case $d=6$.  Let $\bm A$ denote the $2^5\times2^{18}$ payoff matrix, and let $\bm p=(p^0,p^1,\ldots,p^{31})$ and $\bm q=(q^0,q^1,\ldots,q^{262\,143})$ denote the stated optimal strategies.  Specifically, with $p_6$ as in (\ref{p6 B3}) and $q_6$ as in (\ref{q6 B3}),
$$
p^{27}=1-p^{19}=p_6\quad\text{and}\quad p^i=0\text{ if }i\neq19,27.
$$
and
$$
q^{254\,945}=1-q^{254\,913}=q_6\quad\text{and}\quad q^j=0\text{ if }j\neq254\,913, 254\,945.
$$
($254\,913$ and $254\,945$ are the decimal forms of the binary numbers in (\ref{binary}).)  Then, with $v_6$ as in (\ref{v6 B3}), it suffices to show that
\begin{equation}\label{ineq1 B3}
\bm p\bm A\ge(v_6,v_6,\ldots,v_6)
\end{equation}
componentwise ($2^{18}$ inequalities), and
\begin{equation}\label{ineq2 B3}
\bm A\bm q^\textsf{T}\le(v_6,v_6,\ldots,v_6)^\textsf{T}
\end{equation}
componentwise ($2^5$ inequalities).  Note that it is not necessary to evaluate the $2^{23}$ entries of $\bm A$.  (\ref{ineq1 B3}) involves only rows 19 and 27 of $\bm A$ (when rows are labeled 0 to 31), while (\ref{ineq2 B3}) involves only columns $254\,913$ and $254\,945$ of $\bm A$ (when columns are labeled 0 to $262\,143$).  We have confirmed (\ref{ineq1 B3}) and (\ref{ineq2 B3}) using a \textit{Mathematica} program.  The program is, however, unnecessarily time-consuming.  A more efficient way to proceed is to use Lemma \ref{Lemma2} to verify (\ref{ineq1 B3}), with rows 19 and 27 (of 0--31) being Player's two pure strategies.  (In that case the number of exceptional cases is 6, namely $(0,6,\varnothing),(1,5,\varnothing),(2,4,\varnothing),(3,3,\varnothing),(7,9,\varnothing),(8,8,\varnothing)$, not 18.  $T_{10}$ is empty.)

Similar programs give analogous results for every positive integer $d$.  When $d$ is 1, 2, or 9 and we use the two Banker pure strategies whose mixture is optimal under Model B2, we find that (\ref{ineq1 B3}) fails.  By determining which components of the vector inequality fail, we can propose and confirm the correct optimal strategies under Model B3.  Specifically, if $d=2$ or $d=9$, then (\ref{ineq1 B3}) fails at just two components, namely the two that determine Banker's optimal mixed strategy under Model B3.  If $d=1$, then (\ref{ineq1 B3}) fails at eight components, which include the two optimal ones.  Some trial and error may be required in this case.

We hasten to add that, just as in the case $d=6$, there is a more efficient way.  The two Player pure strategies specified by the kernel do not vary with $d$ and are rows 19 and 27 (of 0--31).  We again apply Lemma \ref{Lemma2} to establish (\ref{ineq1 B3}), saving time and avoiding the issue that occurred in the cases $d=1,2,9$.  We find that, for $d=1,2,\ldots,9$, $n_d=7,9,8,9,6,6,6,7,7$, respectively, and, if $d\ge10$, $n_d=6$.  In each case, $T_{10}$ is empty.

More importantly, this last method allows us to treat the cases $d\ge10$ simultaneously, similarly to what we did in Section \ref{B2}.  For $d\ge10$, the six points $p(l)$ ($l\in\{(0,6,\varnothing),(1,5,\varnothing),(2,4,\varnothing),(3,3,\varnothing),(7,9,\varnothing),(8,8,\varnothing)\}$), relabeled as $p_1,p_2,\ldots,p_6$, at which Player's expected payoff is evaluated, satisfy $p_6<p_1<p_5<p_3<p_2<p_4$, so the algorithm applies with variable $d$.  For $d\ge10$, $V(p)$ is maximized at $p_1$ (or $p(0,6,\varnothing)$).  At $p_6$ (or $p(8,8,\varnothing)$), for example,
\begin{eqnarray*}
V(p_6)&=&-8(695\,877\,632\,d^7-281\,198\,592\,d^6+34\,472\,064\,d^5+1\,177\,024\,d^4\\
&&\quad{}-901\,592\,d^3+119\,896\,d^2-6755\,d+123)/[(22\,d-3)(52\,d)_6],
\end{eqnarray*}
and this is less than $V(p_1)$ (see (\ref{vd B3})) for all $d\ge10$, though the difference tends to 0 as $d\to\infty$.
\end{proof}

\end{document}